\newtheorem{theorem}{Theorem}
\newtheorem{proposition}[theorem]{Proposition}
\newtheorem{lemma}[theorem]{Lemma}
\newtheorem{claim}[theorem]{Claim}
\newcommand{\classP}{{\sf P}}
\newcommand{\classNP}{{\sf NP}}
\newcommand{\NN}{\ensuremath{\mathbb{N}}}
\newcommand{\RR}{\ensuremath{\mathbb{R}}}
\newcommand{\R}{{\mathcal R}}
\newcommand{\e}{{\mathrm e}}
\DeclareMathOperator*{\poly}{poly} 
\newcommand{\calD}{\ensuremath{\mathcal{D}}}
\newcommand{\calS}{\ensuremath{\mathcal{S}}}
\renewcommand{\Pr}[1]{\mbox{\rm\bf Pr}\left[#1\right]}
\newcommand{\Ex}[1]{\mbox{\rm\bf E}\left[#1\right]}
\newcommand{\Exl}[2]{\mbox{\rm\bf E}_{#1}\left[#2\right]}
\newcommand{\D}{\displaystyle}
\newcommand{\junk}[1]{}
\title{Secondary Spectrum Auctions for Symmetric and\\Submodular Bidders}
\author{%
  Martin Hoefer%
  \thanks{Department of Computer Science, RWTH Aachen University, {\tt
      mhoefer@cs.rwth-aachen.de}. Supported by DFG grant Ho 3831/3-1.}
  \and Thomas Kesselheim%
  \thanks{Department of Computer Science, RWTH Aachen University, {\tt
      kesselheim@cs.rwth-aachen.de}. Supported by DFG through UMIC
    Research Centre at RWTH Aachen University.}  
}
\date{}
\begin{document}

\maketitle
\begin{abstract}
  We study truthful auctions for secondary spectrum usage in wireless
  networks. In this scenario, $n$ communication requests need to be
  allocated to $k$ available channels that are subject to interference
  and noise. We present the first truthful mechanisms for secondary
  spectrum auctions with symmetric or submodular valuations. Our
  approach to model interference uses an edge-weighted conflict graph,
  and our algorithms provide asymptotically almost optimal
  approximation bounds for conflict graphs with a small inductive
  independence number $\rho \ll n$. This approach covers a large
  variety of interference models such as, e.g., the protocol model or
  the recently popular physical model of interference. For unweighted
  conflict graphs and symmetric valuations we use LP-rounding to
  obtain $O(\rho)$-approximate mechanisms; for weighted conflict
  graphs we get a factor of $O(\rho \cdot (\log n + \log k))$. For
  submodular users we combine the convex rounding framework
  of~\cite{DughmiSTOC11} with randomized meta-rounding to obtain
  $O(\rho)$-approximate mechanisms for matroid-rank-sum valuations;
  for weighted conflict graphs we can fully drop the dependence on $k$
  to get $O(\rho \cdot \log n)$. We conclude with promising initial
  results for deterministically truthful mechanisms that allow
  approximation factors based on $\rho$.
\end{abstract}

\thispagestyle{empty}
\setcounter{page}0
\clearpage

\section{Introduction}

The development of wireless networks crucially relies on successful
management of the frequency spectrum to provide reliable network
access. Nowadays, spectrum allocation is static -- service providers
(so-called \emph{primary users}) can obtain nation-wide licenses for
channels in governmental spectrum auctions. This practice is
inefficient and problematic: While primary users often use their
spectrum bands only in selected local areas, new and innovative
applications suffer in their development, because global licenses are
difficult to obtain or generally unavailable. A major research effort
is currently underway in computer science and engineering to overcome
this artificial scarcity and let primary users open their bands in
local areas for so-called secondary usage. \emph{Auctions} are
attractive to coordinate secondary spectrum usage, as they allow
implementing social or monetary goals in a market with self-interested
participants having private information. Interest in secondary
spectrum auctions has increased significantly in recent years
(see~\cite{Hoefer11,Gopinathan11,GopinathanREV11,Zhou08,Zhou09},
and~\cite{Berry10} for a general discussion), but the algorithmic and
strategic problems are still poorly understood.

In secondary spectrum markets, a natural regulatory goal is to
maximize social welfare, i.e., the total valuation or benefit of the
channel allocation to the secondary users. As constraint for the
allocation, the assigned channels must allow successful transmission
in the presence of interference and noise. Positioning and
interference situation is often known or can sometimes even be
observed publicly, but valuations are private information of the users
and have to be collected by the algorithm. In this process, secondary
users have an obvious incentive to manipulate the algorithm by
misreporting their valuation. In this paper, we therefore strive to
design \emph{truthful mechanisms} that allocate channels and use
payments to motivate users to reveal their values truthfully.

This scenario represents a novel and non-trivial extension of combinatorial
auctions. In combinatorial auctions we have to allocate $k$
indivisible items (channels) to $n$ bidders (users). Each bidder $v$
has a valuation $b_v(S)$ for any subset $S$ of items. The goal is to
maximize social welfare, i.e., the sum of (reported) valuations for
the assigned item sets. Secondary spectrum auctions extend this model
by allowing to give a single item/channel to \emph{multiple users} if
the set of users is feasible in terms of interference. Interference
can be modeled in various ways, and we follow the approach
of~\cite{Hoefer11} where users are vertices in a publicly known
edge-weighted conflict graph. A set of users is feasible for a channel
if they form an independent set in the graph, for a suitably defined
notion of independent set. This approach covers virtually all existing
interference models in the literature~\cite{Hoefer11,WanWASA09}. For
instance, if users are communication requests in the physical model of
interference, we can use edge weights corresponding to the affectance
between requests, and feasibility due to bounded
signal-to-interference-plus-noise ratio (SINR) is then equivalent to
having an independent set (as defined below, see
also~\cite{Hoefer11}).

Interestingly, conflict graphs resulting from popular interference
models (e.g., protocol model~\cite{Wan09} or physical
model~\cite{Kesselheim10,Hoefer11,Kesselheim11}) have a small
\emph{inductive independence number} $\rho$. The wide applicability of
this non-standard graph parameter for algorithm design is only
recently starting to be explored~\cite{Akcoglu00,Ye09,Chen10}. For our
secondary spectrum auctions it allows to bypass well-known lower
bounds of $\Omega(n^{1-\epsilon})$ for approximating independent set
and derive significantly improved guarantees based on
$\rho$~\cite{Hoefer11}. However, even in ordinary combinatorial
auctions with $\rho = 1$ any efficient algorithm can only achieve a
factor of essentially $\min\{n,\sqrt{k}\}$ unless we make additional
assumptions on the user valuations~\cite{Mirrokni08,Lehmann02}.

\subsection{Contribution}

In this paper, we design randomized auctions for spectrum markets,
where secondary users strive to acquire one or more of a set of
available channels. Marking additional assumptions on the user
valuations allows us to bypass the $\min\{n,\sqrt{k}\}$ lower bound
and to significantly improve previous results. We examine the two
prominent classes of symmetric and submodular valuations. Both classes occupy a central position in the
literature on combination auctions, and they have very natural and
intuitive intepretations in the context of secondary spectrum
auctions.

Symmetric valuations are the analog to multi-unit auctions, where each
valuation only depends on the \emph{number} of channels rather than
the exact subset. This is a natural assumption in a secondary spectrum
auction of equally sized channels which all offer very similar
conditions. Submodularity is economically interpreted as diminishing
marginal returns. A common representative are coverage valuations,
where users pick elements each covering a certain range, and the value
is the total covered area. This is a natural assumption, e.g, when
secondary users are transmitters that strive to be received by as many
mobile stations as possible, where each of the latter operates on a
fixed subset of channels.

For symmetric valuations (see Section~\ref{sec:symmetric}) we use the
intuition of multi-unit auctions and round a suitably defined linear
program yielding only an assignment of numbers of channels. Using
these numbers an independent set for each channel is then created by a
greedy approach. This allows to avoid dependence on $k$ and obtain an
approximation factor of $O(\rho)$ for unweighted conflict graphs. Note
that this is asymptotically almost optimal under standard complexity
assumptions. Theorem~5 in \cite{Hoefer11} shows that there is no $\rho
/ 2^{O(\sqrt{\log \rho})}$-approximation unless $\classP =
\classNP$. Truthfulness is achieved via combination of our approach
with the celebrated randomized meta-rounding framework by Lavi and
Swamy~\cite{Lavi05}. For edge-weighted conflict graphs, the
construction step of independent sets is significantly more involved. The asymmetry of conflicts inherent in edge-weighted graphs
require the use of additional concurrent contention resolution methods
to partition the rounded set of requests into feasible independent
sets. This approach allows to obtain a factor of $O(\rho\cdot(\log n +
\log k))$. Our resulting mechanisms are randomized, run in polynomial
time, and yield truthfulness in expectation.

For submodular valuations (see Section~\ref{sec:mrs}) we focus on
matroid-rank-sum valuations, which encompass the most frequently
studied submodular valuations. We design randomized mechanisms that
fall into the class of maximum-in-distributional range (MIDR)
mechanisms. In particular, our approach is along the lines of the
convex rounding technique recently pioneered
in~\cite{Dughmi11,DughmiSTOC11} and achieves an approximation factor
of $O(\rho)$ for unweighted conflict graphs. Again, this is
asymptotically almost optimal under standard complexity
assumptions. In contrast to the case of symmetric valuations, we can
fully omit the dependence on $k$ and show factors of $O(\rho \cdot
\log n)$ even for weighted conflict graphs. Our rounding scheme is
similar to the Poisson rounding scheme from~\cite{DughmiSTOC11}. The
main difference and complication is again the need to round each
channel to an independent set of users. To achieve this, we round
independently for each channel and build the required support of
independent sets using a randomized meta-rounding approach. Probably
the most technical contribution is showing that this rounding scheme
preserves the favorable conditioning properties that allow to apply convex optimization techniques to compute the
underlying distribution with sufficient precision in expected
polynomial time, even for weighted conflict graphs. Our resulting
mechanisms are again randomized and provide truthfulness in
expectation.

Finally, we also briefly discuss designing deterministic truthful
mechanisms (see Section~\ref{sec:discuss}). We present a promising
initial result, a monotone greedy $O(\rho \cdot \log n)$-algorithm for
a single channel in unweighted conflict graphs. However, this area
remains mostly as an interesting and important avenue for future work.

\subsection{Related Work}
\label{sec:contribution}

Our paper is connected to recent approaches for designing truthful
mechanisms in secondary spectrum markets without~\cite{Zhou08,Zhou09}
and with non-trivial worst-case approximation guarantees, e.g., for
social welfare and fairness~\cite{Gopinathan11} or
revenue~\cite{GopinathanREV11}. However, all these works are
restricted to a single channel and unweighted conflict graphs. To this
date the only general (analytical) treatment of approximation
algorithms and truthful mechanisms for multi-channel secondary
spectrum auctions is~\cite{Hoefer11} where truthful-in-expectation
mechanisms for general user valuations are designed using the
inductive independence number in edge-weighted conflict graphs. For
unweighted conflict graphs the approximation guarantee is $O(\rho
\cdot \sqrt{k})$, for edge-weighted conflict graphs $O(\rho \cdot
\sqrt{k} \cdot \log n)$. The former result is asymptotically almost
optimal in $\rho$ if $k=1$~\cite{Trevisan01} and in $k$ if $\rho =
1$. The latter lower bound is a well-known result in combinatorial
auctions~\cite{Mirrokni08,Lehmann02}.

In ordinary combinatorial auctions, these strong lower bounds
initiated the study of relevant subclasses of valuations, for an
overview see, e.g.,~\cite{BlumrosenChapter07}. Symmetric valuations
essentially pose a knapsack problem of assigning numbers of items to
bidders, and a deterministic truthful greedy
2-approximation~\cite{Mualem02} was the first benchmark
solution. Since then there has been significant progress including,
e.g., approximation schemes for single-minded bidders~\cite{Briest05},
$k$-minded bidders~\cite{Dobzinski10}, or monotone
valuations~\cite{Dobzinski09,Voecking12}. In contrast to these works,
we must additionally decompose assigned numbers of channels into an
independent set for each single channel. Here we rely on rounding
linear programs to ensure that such a decomposition exists and can be
found in polynomial time.

For submodular valuations, social welfare maximization without
truthfulness is essentially solved. Optimal $(1-1/\e)$-approximation
algorithms exist even for value oracle access~\cite{Vondrak08}, where
each valuation $b_v$ is an oracle that we can query to obtain $b_v(S)$
for a single set $S$ in each operation. This factor cannot be improved
assuming either polynomial communication~\cite{Mirrokni08} in the
value oracle model or polynomial-time complexity in
general~\cite{Khot08}. For the strategic setting and general
submodular valuations, the best factors are $O\left(\frac{\log m}{\log
    \log m}\right)$ for truthfulness in
expectation~\cite{DobzinskiCoRR10}, and $O(\log m \log \log m)$ for
universal truthfulness~\cite{Dobzinski07}. Dughmi et
al~\cite{DughmiSTOC11} recently proposed a convex rounding technique
to build truthful-in-expectation mechanisms. Their approach yields an
optimal $(1-1/\e)$-approximation for the class of matroid-rank-sum
valuations. It follows the idea of maximal-in-distributional range
(MIDR) mechanisms by defining a range of distributions independent of
the valuations and a rounding procedure. Both are designed in a way
that finding the optimal distribution over the range for the reported
valuations becomes a convex program with favorable conditioning
properties. Hence, the optimal distribution can be found using
suitable convex optimization methods in expected polynomial
time. Truthfulness follows using the Vickrey-Clarke-Groves (VCG)
payment scheme. Very recently, Dughmi and Vondrak showed that a
similar result cannot be obtained for general submodular valutions in
the oracle model~\cite{DughmiFOCS11}.

Designing (non-truthful) algorithms for independent set problems in
conflict graphs has received significant attention recently,
especially for graphs based on the physical model of interference with
SINR constraints. If each request has a value of 1 for being in the
independent set, asymptotically optimal performance bounds for
specific transmission power assignments were obtained when requests
are located in various classes of metric spaces~\cite{Goussevskaia09,
  Fanghaenel10, Halldorsson11}. For the problem where powers can be
arbitrarily chosen, there is a constant-factor approximation
algorithm~\cite{Kesselheim11}.

The inductive independence number is a non-standard graph parameter
that is only recently starting to receive increased attention. Up to
our knowledge the parameter has first been used in~\cite{Akcoglu00},
and since then has been rediscovered independently a number of times
(see, e.g., ~\cite{Wan09}). Ye and Borodin~\cite{Ye09} recently
conducted the first study addressing general issues that arise when
using the measure for solving algorithmic problems in unweighted
graphs. The eminent usefulness of the parameter for analyzing
interference models and spectrum markets was highlighted
in~\cite{Hoefer11}.

\section{Preliminaries}

\subsection{Channel Allocation in Spectrum Markets}

In secondary spectrum markets there is a set $[k]$ of $k$ available
\emph{channels} and a set $V$ of $n$ \emph{users} or
\emph{bidders}. Each user $v \in V$ has a \emph{valuation} or
\emph{benefit} $b_v : 2^{[k]} \to \RR^+$. A valuation function $b_v$
is called \emph{symmetric} if $b_v(T) = b_v(|T|)$ for all $T \subseteq
[k]$. It is \emph{submodular} if $b_v(T \cup T') + b_v(T \cap T') \le
b_v(T) + b_v(T')$ for all $T \subseteq T'$. For submodular valuations
we also assume they are \emph{monotone} with $b_v(T) \le b_v(T')$ for
$T \subseteq T'$.  A valuation $b_v$ is a \emph{matroid rank sum
  (MRS)} function if there exists a family of matroid rank functions
$u_1, \ldots, u_{\kappa} : 2^{[k]} \to \NN$, and associated
non-negative weights $w_1, \ldots, w_{\kappa} \in \RR^+$, such that
$b_v(T) = \sum_{\ell = 1}^{\kappa} w_\ell u_\ell(T)$ for all $T
\subseteq [k]$.

To model interference we represent users as vertices in a complete
edge-weighted and directed \emph{conflict graph} $G = (V,E,w)$. The
weight $w(u,v)$ of edge $(u,v)$ represents the interference that user
$u$ creates for user $v$ if both are assigned to the same
channel. Interference between users is similar on each channel. A set
of users $U \subseteq V$ is \emph{feasible} or an \emph{independent
  set} if $\sum_{u \in U} w(u,v) < 1$ for all $v \in U$. In
\emph{unweighted} conflict graphs all weights $w(u,v) \in \{0,1\}$ and
our definition of independent set is the same as in the classical
sense. For many standard interference models, we can define weighted
conflict graphs such that independent sets are exactly the sets for
which we can have successful simultaneous transmission in the
interference model. For instance, the protocol model results in
unweighted conflict graphs, or the physical model of interference
yields weighted conflict graphs where independent sets are feasible
with respect to the SINR; for details see~\cite{Hoefer11}.

The algorithmic challenge in secondary spectrum markets is the
\emph{channel allocation problem}. In an optimal solution $S$, each
user $v$ receives a subset of channels $S_v \subseteq [k]$ such that
each channel is given to an independent set in the conflict graph and
the \emph{social welfare} $b(S) = \sum_{v \in V} b_v(S_v)$ is
maximized. In contrast to ordinary combinatorial auctions, an
independent set can include more than one user. Our mechanisms cope
with this issue using a structural parameter called inductive
independence number. Let us define \emph{symmetric weights} by
$\bar{w}(u,v) = w(u,v) + w(v,u)$. Then the \emph{inductive
  independence number} is the smallest number $\rho$ such that there
is an ordering $\pi$ of the vertices satisfying the following
condition: For all $v \in V$ and all independent sets $M \subseteq V$,
we let $M_v = M \cap \{u \in V \mid \pi(u) < \pi(v)\}$ and have that
$\sum_{u \in M_v} \bar{w}(u,v) \le \rho$. Hence, $\rho$ is the
smallest number such that by picking the best ordering we can bound
for any $v \in V$ the incoming weight from any independent set among
previous vertices to at most $\rho$. We assume that $\rho$ and the
ordering $\pi$ of $V$ are given. For many interference models and
their resulting conflict graphs we can find in polynomial time small
upper bounds on $\rho$ and a corresponding ordering witnessing
$\rho$. For example, in the protocol model $\rho = O(1)$~\cite{Wan09}
and in the physical model $\rho = O(\log n)$~\cite{Kesselheim10} or
$\rho = O(1)$~\cite{Kesselheim11}, depending on power control
assumptions. In both cases, $\pi$ orders users with decreasing or
increasing distance between sender and receiver.

\subsection{Mechanism Design Basics}

To avoid that user $v$ will strategically misreport his valuation, we
charge payments $p_v$ and make \emph{truthfulness} a dominant
strategy. For each user $v \in V$ we ensure that his
\emph{quasi-linear utility} satisfies $b_v(S_v) - p_v(b_v,b_{-v}) \ge
b_v(S'(v)) - p_v(b_v',b_{-v})$, where $S$ and $S'$ are our solutions
to the channel allocation problem when $v$ reports the true $b_v$ and
a some possibly other $b_v'$, respectively. This can be achieved using
classic \emph{Vickrey-Clarke-Groves (VCG) payments} if the allocation
problem is always solved optimally.

In contrast, efficient truthful mechanisms cannot compute optimal
solutions to intractable problems. For some problems, deterministic
mechanisms can achieve only trivial approximation
guarantees~\cite{Papadimitriou08}. The situation is much better if we
resort to \emph{randomized mechanisms}, which define a distribution
$D$ over the set of solutions \calS\ for the channel allocation
problem and output an allocation $S \in \calS$ according to $D$. In
this case, we aim for \emph{truthfulness in expectation}, i.e., for
every $v \in V$
\[
\Exl{S \sim D}{b_v(S_v) - p_v(b_v,b_{-v})} \ge \Exl{S \sim
  D'}{b_v(S_v) - p_v(b_v',b_{-v})}\enspace,
\]
where $D'$ is the distribution if $v$ reports $b_v'$ instead of
$b_v$. A general technique to design such mechanisms is
\emph{maximal-in-distributional range (MIDR)}. Here we fix a set (the
range) of distributions \calD\ over \calS, where \calD\ is independent
of the valuations $b_v$. The algorithm receives all reported
valuations $b_v$ and optimizes exactly over \calD\ to find $D \in
\calD$ with maximum expected social welfare. Due to exact optimization
over \calD, the mechanism can use VCG payments to guarantee
truthfulness in expectation. The obvious problem in MIDR is designing
the distributional range \calD\ (1) large enough to contain a good
approximation for every possible vector of user valuations, and (2)
small enough to allow for exact optimization over \calD\ in polynomial
time. Our mechanisms in Sections~\ref{sec:symmetric} and~\ref{sec:mrs}
will all be MIDR mechanisms. In Section~\ref{sec:discuss} we also
briefly treat designing greedy mechanisms that are truthful and
deterministic.

\section{Symmetric Valuations}
\label{sec:symmetric}

In this section we consider spectrum auctions with \emph{symmetric
  valuations} in which $b_v(T) = b_v(\lvert T \rvert)$ for all $v \in
V$. We concentrate on designing approximation algorithms that can be turned
into truthful MIDR mechanisms following the framework by Lavi and
Swamy~\cite{Lavi05}.

Our algorithms round the following LP relaxation based on $k \cdot
\lvert V \rvert$ variables $x_{v, i} \in \{0,1\}$ indicating if $v$
gets exactly $i$ channels or not. The relaxation reads
\begin{equation}
  \label{lp:MultiUnit}
  \begin{array}{lrcll}
    \mbox{Max. } & \D \sum_{v \in V} \sum_{i=1}^k b_v(i) \cdot x_{v, i}\\
    \mbox{s.t. } & \D \sum_{\substack{u \in V\\\pi(u) < \pi(v)}} \;\; \sum_{i=1}^k i \cdot \bar{w}(u, v) \cdot x_{u, i} &\leq& \rho \cdot k & \text{for all $v \in V$} \\
    & \D \sum_{i=1}^k x_{v, i} &\leq& 1 & \text{for all $v \in V$} \\
    & x_{v,i} &\ge& 0 & \text{for all $v \in V$, $i \in [k]$.}
  \end{array}
\end{equation}
Note that this relaxation does not describe the problem exactly, as an
integral solution to the relaxation might not be feasible for the
channel allocation problem. In particular, the relaxation does not
specify which user receives which channel, but this information is
critical for interference and feasibility of the requests.

We solve the LP relaxation optimally. The computed fractional solution
is then decomposed into two solutions $x^{(1)}$ and $x^{(2)}$, that
are rounded separately. Based on such a solution, for each user $v$ a
preliminary number of channels $d^{(l)}_v$ is determined at
random. The probability is proportional to the fractional variables
$x^{(l)}_{v, i}$. Having assigned these numbers of channels, we still
have to derive a feasible allocation. In this allocation, each user
$v$ either gets $d^{(l)}_v$ channels or none.

\subsection{Unweighted Conflict Graphs}

In the case of unweighted conflict graphs, we use a simple greedy
approach to distribute available channels to users, see
Algorithm~\ref{algo:MultiUnit}.
\begin{algorithm}[t]
  \label{algo:MultiUnit}
  \DontPrintSemicolon
  Decompose an optimal solution $x$ to LP~\eqref{lp:MultiUnit} into
  two solutions $x^{(1)}$ and $x^{(2)}$ as follows:\hspace{2cm} Set
  $x^{(1)}_{v,i} = x_{v,i}$ if $i \leq k/2$ and $x^{(1)}_{v,i} = 0$
  otherwise; set $x^{(2)} = x - x^{(1)}$. \;

  \For{$l \in \{1, 2\}$}{ 
    \For{$v \in V$ in increasing order of $\pi$ values}{ 
      with probability $\frac{x^{(l)}_{v, i}}{4 \rho}$ set $d^{(l)}_v
      := i$\; 
      Let $F^{(l)}_v := \{i \in [k] \mid \text{there is no $u \in
        \Gamma_\pi(v)$ with $i \in S^{(l)}_v$} \}$ \; 
      
      $S^{(l)}_v= \begin{cases} \mbox{arbitrary $M \subseteq F^{(l)}_v$ with $|M| = d^{(l)}_v$} & \mbox{if } \lvert F^{(l)}_v \rvert \geq d^{(l)}_v, \\
        \emptyset & \mbox{otherwise} \end{cases}$
      
    }
  }
  Return the better one of the solutions $S^{(1)}$ and $S^{(2)}$\;
  \caption{LP-Rounding for Symmetric Valuations and Unweighted
    Conflict Graphs}
\end{algorithm}
The expected social welfare of the output will decrease only by a
factor of $O(\rho)$ under the fractional optimum, which is
asymptotically optimal.
\begin{theorem}
  Algorithm~\ref{algo:MultiUnit} returns a feasible allocation of
  social welfare at least $\nicefrac{b^\ast}{16 \rho}$ in expectation.
\end{theorem}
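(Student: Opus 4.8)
The plan is to analyze the two phases symmetrically and argue that in each phase $l$, the expected social welfare of $S^{(l)}$ is a constant fraction of the contribution $\sum_v \sum_i b_v(i) x^{(l)}_{v,i}$ of the subsolution $x^{(l)}$. Since $x^{(1)} + x^{(2)} = x$ achieves the fractional optimum $b^\ast$, one of the two subsolutions has value at least $b^\ast/2$, and returning the better of $S^{(1)}, S^{(2)}$ then yields a further factor of $2$, so it suffices to show $\Ex{b(S^{(l)})} \ge \frac{1}{4}\cdot\frac{1}{4\rho}\sum_{v,i} b_v(i) x^{(l)}_{v,i}$ — wait, more carefully: each user is sampled into $d^{(l)}_v = i$ with probability $x^{(l)}_{v,i}/(4\rho)$, contributing $b_v(i)$ provided $|F^{(l)}_v| \ge d^{(l)}_v$; so the crux is to lower-bound, for each user $v$ and each $i$, the probability that $|F^{(l)}_v| \ge i$ conditioned on $d^{(l)}_v = i$, and show it is at least $1/2$. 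Combining, $\Ex{b(S^{(l)})} \ge \frac12 \sum_{v,i} b_v(i)\,\frac{x^{(l)}_{v,i}}{4\rho} = \frac{1}{8\rho}\sum_{v,i}b_v(i)x^{(l)}_{v,i}$, and then the max over $l \in \{1,2\}$ and the split of $b^\ast$ give $b^\ast/(16\rho)$.

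The key step, then, is the estimate on $|F^{(l)}_v|$. For a fixed $v$, the set $F^{(l)}_v$ of channels still available to $v$ excludes exactly those channels already occupied by in-neighbors $u \in \Gamma_\pi(v)$ (predecessors of $v$ in the $\pi$-ordering that conflict with $v$). The number of forbidden channels is at most $\sum_{u \in \Gamma_\pi(v)} d^{(l)}_u$ (and since we are in phase $l$ only channels in the relevant half-range $\{1,\dots,k/2\}$ or $\{k/2+1,\dots,k\}$ matter, but it is cleanest to just bound the total number of blocked channels). I would compute the expectation of the number of blocked channels: $\Ex{\sum_{u \in \Gamma_\pi(v)} d^{(l)}_u} = \sum_{u \in \Gamma_\pi(v)} \sum_i i\cdot \frac{x^{(l)}_{u,i}}{4\rho}$. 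The first LP constraint, applied with unit weights $\bar w(u,v)=1$ for $u \in \Gamma_\pi(v)$, says $\sum_{u:\pi(u)<\pi(v)}\sum_i i\cdot x_{u,i} \le \rho k$, hence $\sum_{u\in\Gamma_\pi(v)}\sum_i i\cdot x^{(l)}_{u,i} \le \rho k$ as well, so the expected number of blocked channels is at most $\rho k/(4\rho) = k/4$. In phase $l$, a user needs at most $d^{(l)}_v \le k/2$ channels (this is why the split into halves is needed), so $v$ fails only if the number of blocked channels exceeds $k - d^{(l)}_v \ge k/2$, which by Markov's inequality happens with probability at most $(k/4)/(k/2) = 1/2$. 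This must be done conditionally on $d^{(l)}_v = i$, but since the random choices of distinct users are independent, conditioning on $v$'s own draw does not affect the distribution of the $d^{(l)}_u$ for $u \in \Gamma_\pi(v)$, so the bound of $1/2$ carries through.

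The main obstacle I anticipate is making the Markov step fully rigorous: one has to be careful that "number of blocked channels" is genuinely at most $\sum_{u\in\Gamma_\pi(v)} d^{(l)}_u$ — i.e. that $F^{(l)}_v$ is defined via already-assigned sets $S^{(l)}_u$, each of size at most $d^{(l)}_u$ — and that the range restriction in phase $l$ (channels only in one half) is correctly accounted for, since a blocked channel outside $v$'s admissible half does not actually hurt $v$; bounding by the total number of blocked channels is a valid over-count that only helps. A secondary point is checking feasibility of the returned allocation: each channel $i$ assigned in phase $l$ goes to a set of users that, by construction of $F^{(l)}_\cdot$, contains no conflicting pair where one precedes the other in $\pi$ — and since the conflict graph is unweighted and undirected in this sense, that set is an independent set on channel $i$; I would state this as a short preliminary observation before the expectation argument.
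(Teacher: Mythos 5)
There is a genuine gap, and it sits exactly where the decomposition into $x^{(1)}$ and $x^{(2)}$ matters. Your unified Markov argument is correct for the first half and matches the paper's proof there: for $l=1$ all nonzero demands satisfy $d^{(1)}_v \le k/2$, the expected number of channels blocked by conflicting predecessors is at most $\rho k/(4\rho)=k/4$ by the first LP constraint, and Markov gives failure probability at most $(k/4)/(k/2)=1/2$. But your claim that ``in phase $l$, a user needs at most $d^{(l)}_v \le k/2$ channels'' is false for $l=2$: by construction $x^{(2)}_{v,i}$ is nonzero only for $i>k/2$, so every nonzero demand in the second phase \emph{exceeds} $k/2$, and the threshold $k-d^{(2)}_v$ in your Markov step is strictly less than $k/2$ (it can even be $0$ when $d^{(2)}_v=k$). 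The resulting bound $(k/4)/(k-d^{(2)}_v)$ can exceed $1$, so the ``at least $1/2$'' success probability does not follow for phase $2$, and with it the $\frac{1}{8\rho}$ per-phase guarantee and the final $b^\ast/(16\rho)$ bound collapse. In fact, if your claim were true the reason you yourself give for the split would evaporate; the split exists precisely because users with large demand cannot be handled by comparing expected blockage against $k-d_v$.

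The paper closes this case with a different, demand-specific argument that you would need to supply: when all nonzero demands exceed $k/2$, user $v$ can fail to get its $d^{(2)}_v$ channels only if \emph{some} conflicting predecessor $u\in\Gamma_\pi(v)$ received a nonempty allocation (otherwise $F^{(2)}_v=[k]$ and $v$ trivially fits). The number of such allocated predecessors is at most $\sum_{u\in\Gamma_\pi(v)}\sum_{i>k/2} Y^{(2)}_{u,i} \le \frac{2}{k}\sum_{u\in\Gamma_\pi(v)}\sum_{i>k/2} i\cdot X^{(2)}_{u,i}$, whose expectation is at most $\frac{2}{k}\cdot\frac{k}{4}=\frac12$ by the same LP constraint; Markov then gives probability at least $1/2$ that no predecessor is allocated, restoring the $\frac{x^{(2)}_{v,i}}{8\rho}$ bound. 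Your remaining points (independence of $v$'s own draw from its predecessors' outcomes, over-counting blocked channels, and the feasibility observation that conflicts are checked against predecessors and propagate to successors) are fine and essentially as in the paper.
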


\begin{proof}
  Solutions $S^{(1)}$, $S^{(2)}$ separate the problem into two
  subproblems, in which the maximum or minimum non-zero number of
  channels allocated to a single player is $k/2$, respectively. We
  analyze both of these cases separately in the key proposition.

  \begin{proposition}
    For $l \in\{1,2\}$ and the expected social welfare of $S^{(l)}$ we
    have
    \[ \Ex{b(S^{(l)})} \geq \frac{1}{8 \rho} \cdot \sum_{v \in V}
    \sum_{i=1}^k b_v(i) \cdot x^{(l)}_{v, i}\enspace.\]
  \end{proposition}
  
  \begin{proof}
    For all $v \in V$, $i \in [k]$, $l \in \{0,1\}$ let $X_{v,
      i}^{(l)}$ be a 0/1 random variable indicating if in the rounding
    stage $d^{(l)}_v$ is set to $i$. We know that $\Pr{X^{(l)}_{v, i}
      = 1} = \frac{x^{(l)}}{4 \rho}$. Let $Y_{v, i}^{(l)}$ be a 0/1
    random variable indicating if $\lvert S^{(l)}_v \rvert = i$. To
    show the proposition it remains to bound $\Pr{Y_{v, i}^{(l)} = 0
      \mid X_{v, i}^{(l)} = 1}$; that is, the probability that a user
    $v$ does not receive $i$ channels although $d^{(l)}_v$ was set to
    $i$.
    
    \begin{description}
    \item[Case $l=1$:] The event that $Y^{(1)}_{v, i} = 0$ but
      $X^{(1)}_{v, i} = 1$ can only occur if $\lvert F^{(1)}_v \rvert
      \leq i$. So in particular $\lvert F^{(1)}_v \rvert \leq k /
      2$. We can express $\lvert F^{(1)}_v \rvert$ in terms of $Y_{v,
        i}^{(l)}$ as
    \[
    k - \lvert F^{(1)}_v \rvert \leq \sum_{u \in \Gamma_\pi(v)}
    \sum_{i=1}^k i \cdot Y_{u, i}^{(1)} \leq \sum_{u \in
      \Gamma_\pi(v)} \sum_{i=1}^k i \cdot X_{u, i}^{(1)}\enspace.
    \]
    By linearity of expectation and the definition of $\rho$ this
    yields
    \[
    \Ex{k - \lvert F^{(1)}_v \rvert} \leq \sum_{u \in \Gamma_\pi(v)}
    \sum_{i=1}^k i \cdot \frac{x^{(1)}_{u, i}}{4 \rho} \leq
    \frac{k}{4}\enspace.
    \]
    So, we get by Markov inequality
    \[
    \Pr{\lvert F^{(1)}_v \rvert \leq \frac{k}{2}} = \Pr{k - \lvert
      F^{(1)}_v \rvert \geq \frac{k}{2}} \leq \frac{1}{2}\enspace.
    \]
    In total this yields
    \[
    \Pr{Y^{(1)}_{v, i} = 1} = \Pr{X_{v, i}^{(1)} = 1} \cdot \Pr{\lvert
      F^{(1)}_v \rvert \geq i} \geq \Pr{X_{v, i}^{(1)} = 1} \cdot
    \Pr{\lvert F^{(1)}_v \rvert \geq \frac{k}{2}} \geq
    \frac{x^{(1)}_{v, i}}{8 \rho}\enspace,
    \]
    which proves the proposition in Case 1.

  \item[Case $l=2$:] The event that $Y_{v, i}^{(2)} = 0$ but $X_{v,
      i}^{(2)} = 1$ can only happen if there is some $u \in
    \Gamma_\pi(v)$ with $S^{(2)}_u \neq \emptyset$, in which case $\D
    \sum_{u \in \Gamma_\pi(v)} \sum_{i = k/2 + 1}^k Y_{u, i}^{(2)}
    \geq 1$. Furthermore, we have
    \[
    \sum_{u \in \Gamma_\pi(v)} \sum_{i = k/2 + 1}^k Y_{u, i}^{(2)}
    \leq \sum_{u \in \Gamma_\pi(v)} \sum_{i = k/2 + 1}^k X_{u,
      i}^{(2)} \leq \frac{2}{k} \sum_{u \in \Gamma_\pi(v)} \sum_{i =
      k/2 + 1}^k i \cdot X_{u, i}^{(2)}
    \]
    Using linearity of expectation and the definition of $\rho$ this
    yields
    \[
    \Ex{ \sum_{u \in \Gamma_\pi(v)} \sum_{i = k/2 + 1}^k Y_{u,
        i}^{(2)} } \leq \frac{2}{k} \sum_{u \in \Gamma_\pi(v)} \sum_{i
      = k/2 + 1}^k i \cdot \frac{x^{(2)}_{u, i} }{4 \rho} \leq
    \frac{2}{k}\cdot\frac{k}{4} \le \frac{1}{2}\enspace,
    \]
    Markov inequality then implies that the probability that all of
    the $u \in \Gamma_\pi(v)$ have $S^{(2)}_u = \emptyset$ is at
    least $1/2$. This means
    \begin{eqnarray*}
      \Pr{Y^{(2)}_{v, i} = 1} &=& \Pr{X_{v, i}^{(2)} = 1} \cdot \Pr{\lvert
        F^{(1)}_v \rvert \geq i}\\
      &\geq& \Pr{X_{v, i}^{(2)} = 1} \cdot
      \Pr{\forall u \in \Gamma_\pi(v), S^{(2)} = \emptyset} \geq
      \frac{x^{(1)}_{v, i}}{8 \rho}\enspace,
    \end{eqnarray*}
    which proves the Proposition for Case 2.
  \end{description}
\end{proof}

Finally, to prove the theorem we note that by splitting the solution
into two parts and returning the better output, we lose only a factor
of 2 in the approximation guarantee. For the expected social welfare
it holds $\max_{l \in \{1,2\}} \sum_{v \in V} \sum_{i=1}^k b_v(i)
\cdot x^{(l)}_{v, i} \ge b^\ast/2.$
\end{proof}

\subsection{Edge-Weighted Conflict Graphs}
Allocating the channels is much more involved in the case of
edge-weighted conflict graphs due to the asymmetry of interference
constraints. In the unweighted case the simple greedy allocation only
has to make sure there are no edges to vertices on the same
channel. This is unsuitable now since adding a user might violate
constraints at previously added users -- even though constraints are
satisfied for the currently added user.

Having obtained the $d^{(l)}_v$ values in the described way, we first
consider only the incoming weight from users of smaller index like in
the unweighted case. If the incoming weight from previous users is too
high, i.e., $\sum_{u \in V, \pi(u) < \pi(v)} d^{(l)}_u \cdot
\bar{w}(u, v) \geq k / 32$, we remove all channels from the user and
set $d^{(l)}_v := 0$. However, unlike in the unweighted case, this
does not yet guarantee the existence of an allocation. The crucial
difference occurs in the last step, where the allocation is
derived. This step is performed differently for the two solutions of
the decomposition. For the case in which each user was assigned at
most $k/8$ channels, the allocation is made in a randomized fashion in
Algorithm~\ref{algo:WeightedMultiUnitAllocateSmall}. For the other
case, the allocation is made deterministically in
Algorithm~\ref{algo:WeightedMultiUnitAllocateLarge}. Unlike in the
unweighted case, in both cases the resulting allocation will not
include all users at a time but only allocate channels to a subset of
the originally chosen users.

\begin{algorithm}
  \DontPrintSemicolon
  Decompose an optimal solution $x$ to LP~\eqref{lp:MultiUnit} into
  two solutions $x^{(1)}$ and $x^{(2)}$ as follows:\hspace{2cm} Set
  $x^{(1)}_{v,i} = x_{v,i}$ if $i \leq k/8$ and $x^{(1)}_{v,i} = 0$
  otherwise; set $x^{(2)} = x - x^{(1)}$. \;

  \For{$l \in \{1, 2\}$}{ 
    \For{$v \in V$ in increasing order of $\pi$ values}{ 
      With probability $\frac{x^{(l)}_{v, i}}{64 \rho}$ set $d^{(l)}_v
      := i$\; 
      Set $d^{(l)}_v := 0$ if $\sum_{u \in V, \pi(u) < \pi(v)} d^{(l)}_u \cdot \bar{w}(u, v) \geq k / 32$\;
    }
    Run Algorithm \textsc{Allocate}($l$) on $d^{(l)}$, let $S^{(l)}$ be the result\; 
  }
  Return the better one of the solutions $S^{(1)}$ and $S^{(2)}$\;
  \caption{LP-Rounding for Symmetric Valuations and Weighted
    Conflict Graphs \label{algo:WeightedMultiUnitLPRounding}}
\end{algorithm}
\begin{theorem}
  Algorithm~\ref{algo:WeightedMultiUnitLPRounding} returns a feasible
  allocation of social welfare at least $\Omega(\nicefrac{b^\ast}{\rho
    \cdot(\log n + \log k)})$ in expectation.
\end{theorem}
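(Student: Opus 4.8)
The plan is to analyse the two halves $S^{(1)},S^{(2)}$ of the decomposition separately and then combine. Since $x^{(1)}+x^{(2)}=x$ and $x$ is an optimal solution of LP~\eqref{lp:MultiUnit} with value $b^\ast$, we have $\max_{l}\sum_{v}\sum_i b_v(i)x^{(l)}_{v,i}\ge b^\ast/2$, so returning the better of $S^{(1)},S^{(2)}$ costs only a constant factor; it therefore suffices to show that for each $l\in\{1,2\}$
\[
\Ex{b(S^{(l)})}\ \ge\ \Omega\!\left(\frac{1}{\rho\cdot(\log n+\log k)}\right)\sum_{v\in V}\sum_{i=1}^k b_v(i)\,x^{(l)}_{v,i}\enspace.
\]
Feasibility of the returned allocation should be immediate from the design of \textsc{Allocate}: in the randomized small-quota variant (Algorithm~\ref{algo:WeightedMultiUnitAllocateSmall}) and in the deterministic large-quota variant (Algorithm~\ref{algo:WeightedMultiUnitAllocateLarge}) a user is placed on a channel only when this keeps the channel an independent set, and since the procedure only ever adds users (or discards surplus channels, which never hurts feasibility), every channel stays independent throughout. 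A user that cannot be served with its full quota $d^{(l)}_v$ is dropped entirely, so each user contributes either $b_v(d^{(l)}_v)$ or $0$.

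The first step is common to both cases and mirrors the unweighted analysis: I bound the probability that a chosen $d^{(l)}_v$ survives the zeroing line. Conditioned on $d^{(l)}_v$ having been set to $i$ in the rounding step, the incoming symmetric weight $W_v:=\sum_{u\in V,\,\pi(u)<\pi(v)}d^{(l)}_u\,\bar w(u,v)$ is independent of that coin, and by linearity of expectation, the rounding probabilities $x^{(l)}_{u,i}/(64\rho)$, and the LP constraint $\sum_{u:\pi(u)<\pi(v)}\sum_i i\,\bar w(u,v)x^{(l)}_{u,i}\le\rho k$, we get $\Ex{W_v}\le k/64$. Markov's inequality then gives $\Pr{W_v\ge k/32}\le 1/2$, hence $\Pr{d^{(l)}_v=i}\ge x^{(l)}_{v,i}/(128\rho)$. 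Moreover, conditioned on the whole vector $d^{(l)}$, every still-active user has incoming symmetric weight below $k/32$ from the lower-index active users. This bounded-backlog vector is the ``clean'' input handed to \textsc{Allocate}, and the split at $k/8$ guarantees that in subproblem $l=1$ all active quotas are $\le k/8$ and in subproblem $l=2$ all nonzero quotas exceed $k/8$.

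The crux — and the only place where the extra $\log n+\log k$ is paid — is analysing \textsc{Allocate}. Given a $d^{(l)}$-vector with the bounded-backlog property and quotas in the appropriate range, one must show that \textsc{Allocate} produces a feasible allocation whose expected realized welfare is $\Omega(1/(\log n+\log k))$ times $\Ex{\sum_v b_v(d^{(l)}_v)}$. The difficulty is that LP~\eqref{lp:MultiUnit} and the zeroing step control only the weight flowing \emph{into} $v$ from earlier users, whereas feasibility of a channel at $v$ also involves weight from \emph{later} users placed on that channel. In the small-quota case I would have each active user commit to $\Theta(d^{(l)}_v)$ uniformly random candidate channels and run a concurrent contention resolution in $\pi$-order that never places a later user on a channel in a way that overloads an earlier user; each candidate then survives with constant probability (via the $k/32$ backlog bound and Markov on a single channel), and one argues that with probability $\Omega(1/(\log n+\log k))$ at least $d^{(l)}_v$ of the candidates survive while no channel is overloaded — the loss being the union bound over the $\Theta(nk)$ user–channel interference constraints combined with a grouping of the edge weights $\bar w$ by geometric scale. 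The large-quota case is handled by a deterministic $\pi$-ordered assignment; because all active quotas exceed $k/8$ the active set already has symmetric backlog below $1/4$, and again after bucketing the weights one can retain a $1/O(\log n+\log k)$ fraction of the potential welfare deterministically.

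Combining the two steps, for each $l$,
\[
\Ex{b(S^{(l)})}=\sum_{v}\sum_{i} b_v(i)\,\Pr{|S^{(l)}_v|=i}\ \ge\ \Omega\!\left(\tfrac{1}{\log n+\log k}\right)\sum_{v}\sum_i b_v(i)\,\Pr{d^{(l)}_v=i}\ \ge\ \Omega\!\left(\tfrac{1}{\rho(\log n+\log k)}\right)\sum_{v}\sum_i b_v(i)\,x^{(l)}_{v,i}\enspace,
\]
and taking the better of the two solutions together with $\max_l\sum_{v,i}b_v(i)x^{(l)}_{v,i}\ge b^\ast/2$ yields the stated bound. The main obstacle is the third paragraph: designing and analysing the concurrent contention resolution (and its deterministic counterpart) so that every channel remains an independent set while the backward, later-user interference — which the LP does not bound — is tamed, and quantifying that this costs exactly an $O(\log n+\log k)$ factor relative to the unweighted $O(\rho)$ guarantee.
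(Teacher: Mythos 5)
Your outer skeleton is fine and matches the paper: the better-of-two-halves argument with $\max_l\sum_{v,i}b_v(i)x^{(l)}_{v,i}\ge b^\ast/2$, and the analysis of the zeroing step (conditioning on $d^{(l)}_v=i$, bounding $\Ex{W_v}\le k/64$ via the LP constraint and applying Markov to get survival probability $\ge 1/2$, hence $\Pr{d^{(l)}_v=i \text{ survives}}\ge x^{(l)}_{v,i}/(128\rho)$) is exactly the paper's first proposition. However, the heart of the theorem --- showing that {\sc Allocate} turns the surviving demand vector into a \emph{feasible} allocation losing only an $O(\log n+\log k)$ factor --- is not proved in your proposal; you explicitly defer it, and the sketch you give is both different from the paper and unconvincing. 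A union bound over the $\Theta(nk)$ user--channel constraints does not yield a logarithmic loss: to survive a union bound each constraint would have to fail with probability $O(1/(nk))$, a polynomial requirement, and nothing in your sketch explains how ``each user keeps $d_v$ channels with probability $\Omega(1/(\log n+\log k))$'' can hold \emph{simultaneously} with all channels remaining independent sets. Your feasibility claim (``a user is placed only when this keeps the channel independent, and adding users never hurts'') also contradicts the asymmetry you yourself point out: a later user placed on a channel can violate the constraint at an earlier user already on it, so feasibility needs a concrete mechanism, not a genericity claim.

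The paper's resolution is structurally different and is the missing idea. Both {\sc Allocate}(1) and {\sc Allocate}(2) proceed in rounds: in round $t$ a subset $H_t$ of the remaining users is selected in \emph{decreasing} $\pi$-order subject to the threshold $\sum_{v\in H_t}d_v\bar w(u,v)<k/32$, which bounds the incoming weight from \emph{later} users inside $H_t$; combined with the zeroing step (which bounds incoming weight from earlier users) every member of $H_t$ has total weighted in-degree $O(k)$, so in {\sc Allocate}(1) the per-channel coin flips with probability $8d_v/k$ plus the explicit drop rule $\sum_{u\ne v}\bar w(u,v)X_{u,j}<1$ give feasibility outright and each user of $H_t$ collects its $d_v$ channels with constant probability (blocking argument), while in {\sc Allocate}(2) the large quotas make the plain in-degree at most $1/4$ so a direct assignment is feasible. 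The $\log n+\log k$ factor is then paid not per user but globally: a local-ratio-type claim shows $\sum_{v\in H_t}d_v>\tfrac12\sum_{v\in V_t}d_v$, so the total residual demand shrinks geometrically, only $O(\log(nk))$ rounds occur (w.h.p.\ resp.\ deterministically), each round's allocation is feasible, each user is served in exactly one round, and the algorithm returns the \emph{best single round}, which is at least a $1/O(\log n+\log k)$ fraction of $\sum_v b_v(d_v)$. Without this ``partition into few feasible rounds and keep the best one'' argument (or a worked-out substitute), your proposal does not establish either feasibility or the claimed approximation factor.
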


In order to show the bound, we will show that both LP solutions are
rounded to feasible allocations that are in expectation at most a
$O(\rho \cdot (\log n + \log k))$ factor worse than the respective LP
solution.
  
As a first step, we analyze the input given in terms of the number of
channels for each user. In particular, we show that an allocation
satisfying all of these demands simultaneously would in expectation be
at most a $1 / 128 \rho$ factor worse than the fractional solution.
\begin{proposition}
  For $l \in\{1,2\}$ and the expected social welfare of $d^{(l)}$ we
  have
  \[
  \Ex{ \sum_{v \in V} b_v(d^{(l)}_v) } \geq \frac{1}{128 \rho} \cdot
  \sum_{v \in V} \sum_{i=1}^k b_v(i) \cdot x^{(l)}_{v, i}\enspace.
  \]
\end{proposition}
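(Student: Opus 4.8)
The plan is to mimic the analysis of the unweighted case, adapting it to the two structural modifications introduced for weighted graphs: the larger rounding denominator $64\rho$ instead of $4\rho$, and the extra ``kill step'' that sets $d^{(l)}_v := 0$ when the incoming weight $\sum_{u\colon\pi(u)<\pi(v)} d^{(l)}_u\cdot\bar w(u,v)$ reaches $k/32$. For $v\in V$ and $i\in[k]$ let $X^{(l)}_{v,i}$ indicate that the first step sets $d^{(l)}_v := i$, so $\Pr{X^{(l)}_{v,i}=1} = x^{(l)}_{v,i}/(64\rho)$, and let $Z^{(l)}_{v,i}$ indicate that $d^{(l)}_v=i$ survives the kill step. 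Exactly as before, $b(d^{(l)}) = \sum_v b_v(d^{(l)}_v) = \sum_v\sum_i b_v(i)\,Z^{(l)}_{v,i}$, so by linearity of expectation it suffices to show $\Pr{Z^{(l)}_{v,i}=1 \mid X^{(l)}_{v,i}=1}\ge 1/2$ for every $v,i$; combined with $\Pr{X^{(l)}_{v,i}=1}=x^{(l)}_{v,i}/(64\rho)$ this gives the claimed $1/(128\rho)$.

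First I would fix $v$ and condition on $X^{(l)}_{v,i}=1$. The kill step fires exactly when $W_v := \sum_{u\colon\pi(u)<\pi(v)} d^{(l)}_u\cdot\bar w(u,v)\ge k/32$, and crucially $W_v$ depends only on the random choices for users $u$ with $\pi(u)<\pi(v)$, which are independent of the choice for $v$ itself; so conditioning on $X^{(l)}_{v,i}=1$ does not change the distribution of $W_v$. The next step is to bound $\Ex{W_v}$. Writing $d^{(l)}_u = \sum_{j} j\,Z^{(l)}_{u,j} \le \sum_j j\,X^{(l)}_{u,j}$ and taking expectations,
\[
\Ex{W_v} \;\le\; \sum_{u\colon\pi(u)<\pi(v)}\;\sum_{j=1}^k j\cdot\bar w(u,v)\cdot\frac{x^{(l)}_{u,j}}{64\rho} \;=\; \frac{1}{64\rho}\sum_{u\colon\pi(u)<\pi(v)}\sum_{j=1}^k j\cdot\bar w(u,v)\cdot x^{(l)}_{u,j} \;\le\; \frac{1}{64\rho}\cdot\rho k \;=\; \frac{k}{64}\,,
\]
where the penultimate inequality is exactly the first constraint of LP~\eqref{lp:MultiUnit} (applied to $x^{(l)}\le x$). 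Then Markov's inequality gives $\Pr{W_v\ge k/32} \le (k/64)/(k/32) = 1/2$, hence $\Pr{Z^{(l)}_{v,i}=1\mid X^{(l)}_{v,i}=1} = \Pr{W_v < k/32} \ge 1/2$, which completes the argument.

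I do not expect a genuine obstacle here; this proposition is the ``easy half'' that only establishes the number-of-channels input is near-optimal in value, deferring to the Allocate subroutines the real difficulty of turning those demands into a feasible channel assignment. The one point requiring a little care is the independence remark — that the kill indicator for $v$ is a function of earlier users' choices only, so conditioning on $v$'s own outcome is harmless; and the bookkeeping that the LP constraint for $x$ still bounds the analogous sum for the sub-solution $x^{(l)}$, which holds because $0\le x^{(l)}\le x$ coordinatewise. Everything else is the same Markov-plus-linearity template as the unweighted proof, just with constants scaled so that the ``loss'' from the kill step is at most another factor of $2$.
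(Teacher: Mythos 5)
Your proof is correct and follows essentially the same route as the paper's own argument: linearity of expectation plus a Markov bound on the incoming weight $W_v$, controlled via the first LP constraint, showing the kill step fires with conditional probability at most $1/2$. In fact your write-up is a bit cleaner than the paper's (which contains typos such as stating the rounding probability as $x^{(l)}/(4\rho)$ instead of $x^{(l)}_{v,i}/(64\rho)$), and your explicit remarks on independence of $W_v$ from $v$'s own draw and on bounding the possibly-zeroed $d^{(l)}_u$ by the initially drawn values are exactly the points the paper glosses over.
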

\begin{proof}
  For all $v \in V$, $i \in [k]$, $l \in \{0,1\}$ let $X_{v,i}^{(l)}$
  be a 0/1 random variable indicating if in the rounding stage
  $d^{(l)}_v$ is set to $i$. We know that $\Pr{X^{(l)}_{v, i} = 1} =
  \frac{x^{(l)}}{4 \rho}$. Let $Y_{v, i}^{(l)}$ be the respective 0/1
  random variable at the time when the allocation algorithm is
  started.

  We have to bound $\Pr{Y_{v, i}^{(l)} = 0 \mid X_{v, i}^{(l)} =
    1}$. This is the probability that the weight bound in line 5 is
  exceeded. By Markov inequality, we get
  \[
  \Pr{Y_{v, i}^{(l)} = 0 \mid X_{v, i}^{(l)} = 1} \leq \frac{32}{k}
  \cdot \Ex{ \sum_{u \in V, \pi(u) < \pi(v)} d^{(l)}_u \bar{w}(u, v)
    X^{(l)}_{v, i} }\enspace.
  \]
  Applying linearity of expectation and the fact we have an LP
  solution this is
  \[
  \frac{32}{k} \cdot \sum_{\substack{u \in V\\ \pi(u) < \pi(v)}} d^{(l)}_u \cdot
  \bar{w}(u, v) \cdot \frac{x^{(l)}_{v, i}}{64 \rho} \quad \leq \quad \frac{1}{2}
  \enspace.
  \]

  In total, we obtain
  \[
  \Ex{ \sum_{v \in V} b_v(d^{(l)}_v) } = \sum_{v \in V} \sum_{i=1}^k
  b_v(i) \cdot \Pr{Y_{v, i}^{(l)} = 1} \geq \frac{1}{128 \rho} \sum_{v \in V} \sum_{i=1}^k
  b_v(i) \cdot x^{(l)}_{v,i}
  \enspace.
  \]
\end{proof}

In the two following subsections, we consider the two allocation
algorithms and show that in either case a feasible allocation of
social welfare at least $\Omega( \sum_{v \in V} b_v(d^{(l)}_v) /
(\log n + \log k) )$ is computed.

\subsubsection{{\sc Allocate}(1): Allocation algorithm for ``small'' sets}

From a preliminary selection of numbers of channels
Algorithm~\ref{algo:WeightedMultiUnitAllocateSmall} generates a
feasible allocation in which $d_v \leq k / 8$ for each $v \in V$ and
$\sum_{u \in V, \pi(u) < \pi(v)} d_u \bar{w}(u, v) < k / 32$. The idea
is that a number of allocations are computed having the property that
each user is considered in exactly one of these allocations. Each
allocation is computed by first selecting a subset of all users and
then performing $k$ randomized contention resolution steps. We iterate
over the $k$ channels, and for each channel we let each user $v$
independently perform a random experiment. With probability $8 d_v /
k$ it receives this channel tentatively. If the user received
$d_v$ channel it keeps the respective channels in this
allocation is dropped from consideration. All other users are
allocated in later rounds. The main argument to show that this yields
feasibility and provides the desired bound on the approximation factor
relies on a suitable tracking of the degrees during the contention
resolution process.

\begin{algorithm}
Set $V_0 := V$ and $t := 0$\;
\While{$V_t \neq \emptyset$}{
\For{$u \in V_t$ in decreasing order of $\pi$ values}{
  \If{$\sum_{v \in H_t} d_v \cdot \bar{w}(u, v) < k / 32$}{
    Add $u$ to $H_t$ and for each $j \in [k]$ set $X_{v, j}$ independently 
    to $1$ with probability $8 d_v / k$\;
  }
}
\For{$v \in H_t$}{
  For each $j \in [k]$ set $Y_{v,j} = 1$ if $\sum_{u \neq v} \bar{w}(u,v) \cdot X_{u, j} < 1$\;
  \If{$\sum_{j \in [k]} Y_{v,j} \ge d_v$}{
    set $S_v^t$ to an arbitrary subset of $d_v$ channels $j$ with $Y_{v,j} =1$\;
  }
}
Let $V_{t+1}$ be the set of users who have not been allocated anything and set $t := t + 1$\;
}
Return the best one of the allocations $S^1, S^2, S^3, \ldots$\;
\SetAlgoRefName{{\sc Allocate}(1)}
\caption{Channel allocation for users that require at most $k / 8$
  channels.\label{algo:WeightedMultiUnitAllocateSmall} }
\end{algorithm}

\begin{lemma}
  The allocation has social welfare at least $\Omega(\sum_{v \in V}
  b_v(d_v) / (\log n + \log k) )$ with high probability, i.e., with
  probability at least $1 - (nk)^{-c}$ for any constant $c > 1$.
\end{lemma}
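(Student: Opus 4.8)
The plan is to establish three ingredients and combine them: every per‑round allocation $S^t$ is feasible; every user that is placed into $H_t$ in a round is allocated in that round with at least a constant probability; and the total demand $D_t := \sum_{v\in V_t} d_v$ of the still‑unallocated users drops by a constant factor in expectation each round, so that after $T=O(\log n+\log k)$ rounds all users are allocated with high probability. Since the algorithm returns the best of $S^1,S^2,\dots$ and, once the process halts, these sets partition $V$ (each allocated user $v$ gets exactly $d_v$ channels, contributing $b_v(d_v)$ to the round in which it is removed), the best round has social welfare at least $\frac1T\sum_v b_v(d_v)$.

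For feasibility I would first note that we may assume $b_v(\emptyset)=0$, hence discard users with $d_v=0$, so all relevant $d_v\ge 1$; and I would read the algorithm so that a user keeps only channels $j$ it received tentatively ($X_{v,j}=1$) and for which $Y_{v,j}=1$, i.e.\ $\sum_{u\neq v}\bar w(u,v)X_{u,j}<1$. Then, for each $j$, the users other than $v$ assigned channel $j$ in round $t$ all have $X_{u,j}=1$, so $\sum_u w(u,v)\le\sum_u \bar w(u,v)X_{u,j}<1$, which is exactly the independent‑set condition. For the single‑round bound I would condition on the history up to the start of round $t$, so that $V_t$ and $H_t$ (a deterministic function of $V_t$) are fixed. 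For $u\in H_t$ the greedy acceptance rule bounds the ``forward'' incident weight $\sum_{v\in H_t:\pi(v)>\pi(u)} d_v\bar w(u,v)<k/32$, and the input precondition bounds the ``backward'' weight $\sum_{v:\pi(v)<\pi(u)} d_v\bar w(v,u)<k/32$; together the incident weight of $u$ within $H_t$ is below $k/16$. Hence for each channel $\Ex{\sum_{v\neq u}\bar w(v,u)X_{v,j}}<\tfrac12$, so by Markov $Y_{u,j}=1$ with probability $>\tfrac12$, independently of $X_{u,j}$; the indicator $Z_{u,j}:=X_{u,j}Y_{u,j}$ that $u$ keeps channel $j$ satisfies $\Pr{Z_{u,j}=1}>4d_u/k$, and the $Z_{u,j}$ are independent over $j$, so $\Ex{\sum_j Z_{u,j}}>4d_u$. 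Since $d_u\le k/8$, a Chernoff bound gives $\Pr{\sum_j Z_{u,j}<d_u}\le e^{-9d_u/8}\le e^{-9/8}<\tfrac12$, so $u$ is allocated in round $t$ with probability $>\tfrac12$.

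The core step is the potential drop, whose key estimate is $\sum_{v\in H_t} d_v\ge D_t/2$. Each $u\in V_t\setminus H_t$ was rejected because $\sum_{v\in H_t:\pi(v)>\pi(u)} d_v\bar w(u,v)\ge k/32$, so weighting by $d_u$, summing over such $u$, and exchanging the order of summation yields $\frac{k}{32}\sum_{u\in V_t\setminus H_t} d_u\le \sum_{v\in H_t} d_v\sum_{u\in V_t\setminus H_t:\pi(u)<\pi(v)} d_u\bar w(u,v)\le \sum_{v\in H_t} d_v\cdot\frac{k}{32}$, the last step being the precondition $\sum_{u:\pi(u)<\pi(v)} d_u\bar w(u,v)<k/32$; hence $\sum_{u\in V_t\setminus H_t} d_u\le\sum_{v\in H_t} d_v$, which gives the claim. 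Combined with the single‑round bound, $\Ex{D_t-D_{t+1}\mid\mathcal{F}_t}=\sum_{v\in H_t} d_v\Pr{v\text{ allocated}}\ge\tfrac14 D_t$, so $\Ex{D_{t+1}\mid\mathcal{F}_t}\le\tfrac34 D_t$ and $\Ex{D_T}\le(\tfrac34)^T D_0\le(\tfrac34)^T nk$. Choosing $T=O((c+1)(\log n+\log k))$ makes $\Ex{D_T}\le (nk)^{-c}$, and since $D_T$ is a nonnegative integer that is $0$ exactly when $V_T=\emptyset$, Markov gives $\Pr{V_T\neq\emptyset}\le(nk)^{-c}$. On the complementary event the process has halted within $T$ rounds and the produced allocations partition $V$, so their welfares sum to $\sum_v b_v(d_v)$ and the returned (best) allocation has welfare $\ge\frac1T\sum_v b_v(d_v)=\Omega\!\big(\sum_v b_v(d_v)/(\log n+\log k)\big)$.

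I expect the main obstacle to be this potential‑drop step: showing that $H_t$ always captures at least a constant fraction of the total demand $D_t$ is exactly what converts a per‑round constant success probability into an $O(\log n+\log k)$ bound on the number of rounds, and it relies on using the forward greedy bound and the backward precondition bound in combination (both here and in the single‑round estimate). The feasibility check and the single‑round Chernoff computation are routine once the right potential has been identified.
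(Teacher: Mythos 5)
Your proposal is correct and follows essentially the same route as the paper's proof: the same weighted double-counting claim showing $\sum_{v \in H_t} d_v \ge \tfrac12 \sum_{v \in V_t} d_v$ (forward greedy rejection bound exchanged against the backward $k/32$ precondition), a constant per-user success probability for members of $H_t$, the resulting expected $\tfrac34$ contraction of the remaining demand, Markov after $O(\log n + \log k)$ rounds, and the best-of-rounds averaging. The only deviations are cosmetic: you establish the per-user success probability of at least $\tfrac12$ via a Chernoff bound on $\sum_j X_{v,j}Y_{v,j}$ where the paper partitions the $k$ channels into $2d_v$ blocks and applies Markov to the number of empty blocks, and you spell out two points the paper leaves implicit, namely that a kept channel requires $X_{v,j}=1$ (needed for feasibility) and that the bound $\Ex{\sum_{u\neq v}\bar w(u,v)X_{u,j}} < \tfrac12$ combines the greedy acceptance condition with the $k/32$ input precondition.
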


\begin{proof}
  In order to show this bound, it suffices to prove that
  \[
  \Ex{ \sum_{v \in V_{t + 1}} d_v \mid V_t } \leq \frac{3}{4} \sum_{v
    \in V_t} d_v \enspace.
  \]
  Using Markov inequality this implies that for each constant $c>1$
  the probability that the set $V_t$ with $t = (c+1)\log(nk) /
  \log(4/3)$ is not empty is at most
  \[
  \Pr{\sum_{v \in V_t} d_v \geq 1} \leq \Ex{\sum_{v \in V_t} d_v} \leq
  \left( \frac{3}{4} \right)^t nk = (nk)^{-c}\enspace.
  \]
  Thus with high probability at most $O(\log(\sum_{v \in V} d_v)) =
  O(\log n + \log k)$ allocations are computed.

  We prove the bound in two steps. First, we show that the sum of
  demands in the set $H_t$ is at least half of the total demands in
  $V_t$. Afterwards, we observe that for a user in $H_t$, the
  probability to be included is at least $\frac{1}{2}$.

  \begin{claim} 
    \label{clm:log}
    \[
    \sum_{v \in H_t} d_v > \frac{1}{2} \sum_{v \in V_t} d_v \enspace.
    \]
  \end{claim}
  
  \begin{proof}
    Each user $u \in V_t \setminus H_t$ was excluded from $H_t$
    because we have
    \[
    \sum_{\substack{v \in H_t \\ \pi(u) < \pi(v)}} d_v \cdot
    \bar{w}(u,v) \geq \frac{k}{32} \enspace.
    \]
    Taking the sum, weighted by the respective $d_u$ value, we get
    \[
    \sum_{u \in V_t \setminus H_t} d_u \cdot \sum_{\substack{v \in H_t \\
        \pi(u) < \pi(v)}} d_v \cdot \bar{w}(u,v) \geq 
    \sum_{u \in V_t \setminus H_t} d_u \cdot \frac{k}{32} \enspace.
    \]
    On the other hand, we have
    \[
    \sum_{u \in V_t \setminus H_t} d_u \cdot \sum_{\substack{v \in H_t \\
        \pi(u) < \pi(v)}} d_v \cdot \bar{w}(u,v) = \sum_{v \in H_t}
    d_v \cdot \sum_{\substack{u \in V_t \setminus H_t \\ \pi(u) <
        \pi(v)}} d_u \cdot \bar{w}(u,v) < \sum_{v \in H_t} d_v \cdot
    \frac{k}{32} \enspace.
    \]
    Assembling the two bounds yields the claim.
  \end{proof}

  \begin{claim}
    The probability for each user $v \in H_t$ to be included in the
    allocation is at least $\frac{1}{2}$.
  \end{claim}
  
  \begin{proof}
    A user $v \in V$ is not included in the allocation if there is a set
    $M \subseteq [k]$ with $\lvert M \rvert \geq k - d_v$ such that
    $Y_{v, j} = 0$ for all $j \in M$.
    
    Let us first consider a single channel $j$. In order to have
    $Y_{v, j} = 1$, two independent events have to occur: First, we
    have to have $X_{v, j} = 1$ and second $\sum_{u \neq v} \bar{w}(u,
    v) \cdot X_{u, j} < 1$. The probability for the first one is defined in
    the algorithm, the second one can be bounded by the Markov
    inequality to get
    \[
    \Pr{Y_{v, j} = 1} \geq \Pr{X_{v, j} = 1} \cdot \left( 1 - \Ex{
        \sum_{u \neq v} \bar{w}(u, v) \cdot X_{u, j} } \right) = \frac{8
      d_v}{k} \cdot \left( 1 - \sum_{u \neq v} \bar{w}(u, v) \cdot \frac{8
        d_u}{k} \right) \geq \frac{4 d_v}{k} \enspace.
    \]

    Now consider a block $B$ of $\lfloor \frac{k}{2 d_v} \rfloor \geq
    \frac{3 k}{8 d_v}$ consecutive channels. Since the random
    experiments are independent, for such a block $B$ the probability
    of $\sum_{j \in B} Y_{v, j} = 0$ is at most
    \[
    \left( 1 - \frac{4 d_v}{k} \right)^{\frac{3 k}{8 d_v}} \leq
    \exp\left( - \frac{3}{2} \right) \leq \frac{1}{4} \enspace.
    \]

    Since there are $k$ channels in total, we have at least $2 d_v$
    blocks in total. For each of these blocks, the probability of $v$
    getting no channel in this block is at most $\frac{1}{4}$. This
    is, the expected number of blocks $B$ in which $\sum_{j \in B}
    Y_{v, j} = 0$ is at most $\frac{d_v}{2}$. Using the Markov
    inequality, the probability that there are more than $d_v$ blocks
    without a channel for $v$ is less than $\frac{1}{2}$. Thus, with
    probability at least $\frac{1}{2}$, $v$ gets at least 1 channel in
    at least $d_v$ blocks. This yields the claim.
  \end{proof}

  Combining these two insights, we get the desired bound which proves
  the lemma.
\end{proof}

\subsubsection{{\sc Allocate}(2): Allocation algorithm for ``large'' sets}

The
allocation for the case that $d_v \geq k / 8$ or $d_v = 0$ for all $v
\in V$ is performed by
Algorithm~\ref{algo:WeightedMultiUnitAllocateLarge}. Here, we iterate starting with $t = 1$. In each
iteration, a subset $H_t$ of all users is selected by going though the
remaining users in decreasing order of $\pi$. If for a user $v$ we
have $\sum_{v \in H_t} d_v \cdot \bar{w}(u, v) < k / 32$, it is added
to $H_t$. However, in this case the allocation is immediately carried
out in a direct way: Each user that is added to $H_t$ is allocated an
arbitrary set of $d_v$ channels, e.g. the first ones. This iteration
is repeated with the remaining users that did not get allocated
anything until every user $v \in V$ has been allocated $d_v$ channels
in one iteration $t$. Finally, the algorithm picks the best of the
allocations computed in any single iteration.

\begin{algorithm}
Set $V_0 := V$ and $t := 0$\;
\While{$V_t \neq \emptyset$}{
  \For{$u \in V_t$ in decreasing order of $\pi$ values}{
    \If{$\sum_{v \in H_t} d_v \cdot \bar{w}(u, v) < k / 32$}{
      Add $u$ to $H_t$ and set $S_v^t = \{1,\ldots,d_v\}$ \;
    }
}
Let $V_{t+1}$ be the set of users who have not been allocated anything and set $t := t + 1$\;
}
Return the best one of the allocations $S^1, S^2, S^3, \ldots$\;
\SetAlgoRefName{{\sc Allocate}(2)}
\caption{Channel allocation for users that require at least $k / 8$
  channels.\label{algo:WeightedMultiUnitAllocateLarge}}
\end{algorithm}

\begin{proposition}
The algorithm computes at most $O(\log n + \log k)$ allocations and all of them are feasible.
\end{proposition}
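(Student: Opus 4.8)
I would prove the two assertions in turn: the $O(\log n + \log k)$ bound on the number of iterations of the outer \texttt{while}-loop, and feasibility of each single-iteration allocation $S^t$. Since a user with $d_v=0$ needs no channel, I may assume all such users are discarded at the outset, so every user in $V_0$ is \emph{active} with $d_v \ge k/8$; in particular $\Phi_0 := \sum_{v\in V_0} d_v \le nk$.

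For the iteration count I would mimic Claim~\ref{clm:log}. The claim is $\Phi_{t+1} := \sum_{v\in V_{t+1}} d_v \le \tfrac12\,\Phi_t$ as long as $\Phi_t \ge 1$. Every user placed in $H_t$ is allocated its $d_v$ channels and leaves, so each $u \in V_{t+1}$ lies in $V_t\setminus H_t$ and was rejected because, at the moment it was scanned, the then-current $H_t$ — which, as we scan in \emph{decreasing} $\pi$-order, consists exactly of the users of $H_t$ with larger $\pi$-value — already satisfied $\sum_{v\in H_t,\ \pi(v)>\pi(u)} d_v\,\bar w(u,v) \ge k/32$. Multiplying by $d_u$, summing over $u\in V_{t+1}$, and swapping the order of summation gives
\[
\frac{k}{32}\,\Phi_{t+1}\ \le\ \sum_{v\in H_t} d_v \sum_{\substack{u\in V_{t+1}\\ \pi(u)<\pi(v)}} d_u\,\bar w(u,v)\ <\ \sum_{v\in H_t} d_v\cdot\frac{k}{32}\,,
\]
where the last step uses that every $v\in H_t$ is active and hence survived the zeroing step of Algorithm~\ref{algo:WeightedMultiUnitLPRounding}, so $\sum_{u\in V,\ \pi(u)<\pi(v)} d_u\,\bar w(u,v) < k/32$. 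Thus $\Phi_{t+1} \le \sum_{v\in H_t} d_v = \Phi_t - \Phi_{t+1}$, i.e. $\Phi_{t+1}\le\tfrac12\Phi_t$; after at most $\log_2(nk)+1 = O(\log n+\log k)$ iterations we get $\Phi_t<1$, which (all users being active) forces $V_t=\emptyset$ and terminates the loop.

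For feasibility, fix an iteration $t$, a channel $j$, and a user $v$ holding $j$ in $S^t$; then $v\in H_t$, and the remaining users holding $j$ form a subset of $\{u\in H_t : u\ne v\}$, so it suffices to show $\sum_{u\in H_t,\ u\ne v} w(u,v) < 1$. I split this sum according to whether $\pi(u)>\pi(v)$ or $\pi(u)<\pi(v)$ and bound each part by $\tfrac14$. The $u\in H_t$ with $\pi(u)>\pi(v)$ are precisely those that were already in $H_t$ when $v$ was scanned, so the membership test gives $\sum_{u\in H_t,\ \pi(u)>\pi(v)} d_u\,\bar w(u,v) < k/32$; since each such $u$ is active with $d_u\ge k/8$ we have $w(u,v)\le\bar w(u,v)\le \tfrac{8d_u}{k}\,\bar w(u,v)$, so this part is at most $\tfrac{8}{k}\cdot\tfrac{k}{32}=\tfrac14$. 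The $u\in H_t$ with $\pi(u)<\pi(v)$ obey $\sum_{u\in V,\ \pi(u)<\pi(v)} d_u\,\bar w(u,v) < k/32$ by the zeroing step of Algorithm~\ref{algo:WeightedMultiUnitLPRounding} (valid since $v$ is active), and the same conversion bounds this part by $\tfrac14$. Hence $\sum_{u\in H_t,\ u\ne v} w(u,v) < \tfrac12 < 1$, so the users on channel $j$ form an independent set and $S^t$ is feasible.

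The two double-counting manipulations are routine, being direct analogues of the computation behind Claim~\ref{clm:log}. The one place demanding care — and the only genuine idea beyond that claim — is the feasibility argument: the membership test in \textsc{Allocate}(2) controls only the incoming weight a newly added user receives from users of \emph{larger} $\pi$, so the contribution of users of \emph{smaller} $\pi$ must be handled separately through the zeroing step of Algorithm~\ref{algo:WeightedMultiUnitLPRounding}, and one must track that both guarantees concern the symmetric weights $\bar w$ while feasibility on a channel is a statement about the directed weights $w$. The hypothesis $d_u\ge k/8$ for every active user is exactly what collapses both $k/32$ bounds to the constant $\tfrac14$. Finally, once the iteration count and feasibility are established, the welfare bound $\Omega\!\big(\sum_{v\in V} b_v(d_v)/(\log n+\log k)\big)$ mentioned in the surrounding discussion is immediate: each user is allocated its full demand $d_v$ in exactly one of the $O(\log n+\log k)$ iterations, so the best iteration retains at least a $1/O(\log n+\log k)$ fraction of $\sum_v b_v(d_v)$.
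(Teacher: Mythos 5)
Your proof is correct and takes essentially the same route as the paper: the Claim~\ref{clm:log}-style double counting (relying on the zeroing step of Algorithm~\ref{algo:WeightedMultiUnitLPRounding}) to halve $\sum_{v \in V_t} d_v$ per iteration, and the conversion $1 \le 8d_u/k$ from $d_u \ge k/8$ to collapse the $k/32$ weight bounds into a constant for feasibility. If anything, your feasibility step is more careful than the paper's one-line bound: you separately handle the incoming weight from $H_t$-members of larger $\pi$ (via the membership test) and of smaller $\pi$ (via the zeroing step), arriving at $1/4 + 1/4 = 1/2$, whereas the paper's displayed chain invokes only a single $k/32$ bound and claims $1/4$; both, of course, suffice since feasibility only requires the incoming weight on a channel to be below $1$.
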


\begin{proof}
  Using exactly the same arguments as in Claim~\ref{clm:log} above, we observe
  \[
  \sum_{v \in H_t} d_v > \sum_{u \in V_t \setminus H_t} d_u \enspace,
  \]
  which shows that at most $O(\log n + \log k)$ allocations are computed.

  The allocations are feasible since the sum of incoming weights on
  any channel is bounded by
  \[
  \sum_{u \in H_t} \bar{w}(u, v) \leq \frac{8}{k} \sum_{u \in H_t} d_u \cdot
  \bar{w}(u, v) < \frac{8}{k} \cdot \frac{k}{32} = \frac{1}{4} \enspace.
  \]
\end{proof}


\subsection{Truthfulness}

To turn the approximation algorithms from the previous section into
truthful mechanisms, we follow the idea by Lavi and
Swamy~\cite{Lavi05} using the randomized meta-rounding
technique~\cite{Carr02} to obtain a MIDR mechanism.
Our approach is similar to the one for
general secondary spectrum auctions~\cite{Hoefer11}. Linear
program~\eqref{lp:MultiUnit} are standard
packing LPs that allow to set up a separation LP to decompose an
optimal fractional solution scaled down by some approximation factor
larger than the integrality gap. Via this new LP we derive a
decomposition into integral solutions that represent feasible
solutions for the channel allocation problem. An optimal solution to
the decomposition LP is a probability distribution, by which we can
randomly pick a feasible solution in the resulting mechanism. As
usual, charging scaled VCG payments results in a mechanism that is
truthful in expectation.

The decomposition LP uses exponentially many variables (probabilities
for every possible feasible solution) but only polynomially many
constraints (decomposition of each non-zero variable in the fractional
optimum of LP~\eqref{lp:MultiUnit}). Thus,
the dual of the LP can be solved using the ellipsoid method with a
suitable separation oracle. The latter can be constructed from our
algorithms presented in the last section, as they verify the correct
approximation factor used for scaling in the decomposition LP. At this
point it is important to remark that the algorithms were defined to be
randomized. Therefore, the running time of the ellipsoid method would
only be polynomial in expectation.  However, all of our algorithms can
be derandomized using standard techniques.  The randomization in
Algorithms~\ref{algo:MultiUnit} and
\ref{algo:WeightedMultiUnitLPRounding} only depends on pairwise
independence.  Algorithm~\ref{algo:WeightedMultiUnitAllocateSmall} can
be made deterministic by using a combination of pairwise-independence
and conditional-expectation techniques. Under these conditions, the
desired decomposition can be found in polynomial time.

A main
drawback of this method is that the dual variables of the
decomposition must be interpreted as valuations of a new channel
allocation problem. Here assumptions like symmetry or submodularity
cannot be made, and algorithms for such special classes of valuations
might not be applicable. However, in our case the symmetry assumption
is encoded directly into LPs~\eqref{lp:MultiUnit} by setting up variables for each
\emph{number} and each not \emph{set} of channels. This property
carries over to the decomposition dual and our algorithms can be
applied.

\section{Matroid-Rank-Sum Valuations}
\label{sec:mrs}

In this section, we treat the class of so-called \emph{matroid rank
  sum (MRS) valuations}, in which $b_v$ for each bidder is a weighted
sum of matroid rank functions. This covers all frequently considered
submodular valuation functions such as, e.g., coverage functions,
matroid weighted-rank functions, and any convex combinations of these.

For ordinary combinatorial auctions, Dughmi et al.~\cite{DughmiSTOC11}
present an MIDR mechanism. The range is given by all solutions to a
linear relaxation of the item-allocation problem. Rounding is done via
a non-standard randomized rounding scheme called \emph{Poisson
  rounding} in~\cite{DughmiSTOC11}. Finding the optimal distribution
implies finding the fractional allocation that will achieve best
social welfare in expectation in the rounding stage. The Poisson
scheme is a \emph{convex rounding scheme}, for which finding the best
fractional allocation becomes a convex program with objective function
being expected social welfare.

Unfortunately, the Poisson rounding scheme is tailored to fit to
ordinary combinatorial auctions. The rounding is performed item-wise
-- when $x_{i,j}$ is the fractional allocation of item $j$ to bidder
$i$, then $j$ is fully given to $i$ with probability $1 - \e^{-x_{v,
    j}}$. With the remaining probability no bidder receives
$j$. Unlike items, the channels in our case can be given to multiple
users, and it takes significantly more effort to build a convex
rounding scheme. In the following we present our approach for this
case. We follow the conventions in~\cite{DughmiSTOC11}, in particular,
with respect to representation of MRS valuations using lottery-value
oracles. In particular, we will show the following theorem.

\begin{theorem}
  There is a truthful mechanism for MRS valuations that runs in
  expected polynomial time and returns a feasible allocation
  representing a $O(\rho)$-approximation for unweighted and a $O(\rho
  \cdot \log n)$-approximation for edge-weighted conflict graphs.
\end{theorem}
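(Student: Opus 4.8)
The plan is to follow the convex-rounding / MIDR blueprint of~\cite{DughmiSTOC11}, adapting it to the two features that distinguish spectrum auctions from ordinary combinatorial auctions: each channel may be given to a whole independent set rather than a single bidder, and feasibility is governed by the weighted conflict graph. First I would set up the distributional range. Let $\mathcal{P}$ be the polytope of fractional assignments $x = (x_{v,j})$ obeying, for each channel $j$, an LP relaxation of ``the users assigned to $j$ form an independent set'' analogous to the per-channel constraints used in LP~\eqref{lp:MultiUnit} (an inductive-independence constraint $\sum_{u : \pi(u)<\pi(v)} \bar w(u,v)\, x_{u,j} \le \rho$ for all $v,j$, plus $\sum_j x_{v,j}$-type packing on the bidder side if needed). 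The rounding scheme $r$ is a Poisson-style per-channel scheme: independently for each channel $j$, each user $v$ is tentatively selected with probability $1 - \e^{-x_{v,j}}$ (or a scaled variant $1-\e^{-x_{v,j}/c\rho}$), and then I invoke the randomized meta-rounding machinery of Lavi--Swamy / Carr--Vempala~\cite{Lavi05,Carr02} on the tentatively-selected fractional vector of each channel to extract an actual distribution over independent sets whose marginals dominate a $1/O(\rho)$ (resp.\ $1/O(\rho\log n)$) fraction of the tentative selection — this is exactly the guarantee already proved for the algorithms of Section~\ref{sec:symmetric}, reused here as a subroutine. The range $\mathcal{D}$ is then $\{ r(x) : x \in \mathcal{P}\}$, which is independent of the reported valuations, so VCG payments over $\mathcal{D}$ give truthfulness in expectation.

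Next I would verify the approximation guarantee. Because each user's selection probability on channel $j$ is $1-\e^{-x_{v,j}}$ and selections are independent across channels, and because (by MRS structure together with the Lavi--Swamy decomposition preserving marginals) the expected value $\Ex{b_v(S_v)}$ is bounded below by a constant times $b_v$ evaluated on the ``surviving'' fractional allocation, the expected social welfare of $r(x)$ is within the claimed $O(\rho)$ (unweighted) or $O(\rho\log n)$ (weighted) factor of the LP optimum over $\mathcal{P}$, which in turn upper-bounds the integral optimum. For the weighted case the extra $\log n$ comes precisely from the concurrent contention-resolution step in {\sc Allocate}(1)/{\sc Allocate}(2), and the point is that the dependence on $k$ is eliminated because all the per-channel arguments are now phrased in terms of probabilities rather than counts of channels.

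The heart of the proof — and the step I expect to be the main obstacle — is showing that optimizing over $\mathcal{D}$ can be done exactly (up to negligible error) in expected polynomial time. Following~\cite{DughmiSTOC11}, I would argue that $G(x) := \Exl{S \sim r(x)}{b(S)}$ is a concave function of $x$ on $\mathcal{P}$: using the lottery-value oracle representation of MRS valuations, $G$ decomposes into a nonnegative combination of expectations of matroid ranks under product distributions with parameters $1-\e^{-x_{v,j}}$, and the substitution $x_{v,j}\mapsto 1-\e^{-x_{v,j}}$ is exactly the reparametrization that turns the multilinear-type extension into a concave function (this is the ``Poisson'' trick). I then need the ``favorable conditioning'' property: that $G$ and its (sub)gradients can be estimated to within the precision needed for the ellipsoid/continuous-greedy-style convex optimizer, and that the optimum is attained in a well-conditioned region so that polynomially many samples suffice. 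The technical complication — absent in~\cite{DughmiSTOC11} — is that the value of $r(x)$ on channel $j$ is not a simple product distribution but the output of a meta-rounding decomposition over independent sets; I would handle this by showing the decomposition can be taken to depend on $x$ in a piecewise-rational, Lipschitz way (the Carr--Vempala LP has the fractional vector only in the right-hand side), so that $G$ remains concave and smooth enough, and that a lottery-value oracle for the composite ``meta-round then evaluate MRS'' object can be simulated in polynomial time by sampling the decomposition. Assembling these pieces — concavity via the Poisson reparametrization, polynomial-time convex optimization via the conditioning bound, and VCG on the valuation-independent range $\mathcal{D}$ — yields the claimed truthful, expected-polynomial-time, $O(\rho)$- (resp.\ $O(\rho\log n)$-) approximate mechanism.
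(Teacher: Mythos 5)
Your overall blueprint (per-channel Poisson-style rounding, meta-rounding over independent sets, concavity from the MRS/lottery-value structure, VCG over a valuation-independent range) matches the paper, but the step you yourself flag as the main obstacle is handled incorrectly, and this is a genuine gap. In your scheme the marginal probability that $v$ receives channel $j$ is whatever the Carr--Vempala/Lavi--Swamy decomposition happens to deliver (``dominates a $1/O(\rho)$ fraction of the tentative selection''), and you propose to salvage concavity by arguing the decomposition depends on $x$ in a piecewise-rational, Lipschitz way. That does not work: MIDR truthfulness requires optimizing an \emph{exactly} concave objective over the range, and a marginal that is only approximately controlled, and that depends on which decomposition the ellipsoid method outputs, destroys the closed-form product structure on which the concavity argument of \cite{DughmiSTOC11} rests. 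The paper's fix is the opposite order with an extra calibration step: first decompose the scaled fractional solution $x_{\cdot,j}/\alpha$ into independent sets (marginals exactly $x_{v,j}/\alpha$), pick one set at random, and then remove each tentatively selected user independently with probability $p_{v,j}=\bigl(1-\e^{-x_{v,j}/(2\alpha)}\bigr)\big/\bigl(x_{v,j}/\alpha\bigr)$, so that the final marginal is \emph{exactly} $1-\e^{-x_{v,j}/(2\alpha)}$ irrespective of the decomposition. Since each bidder's value depends only on his own channel set and channels are rounded independently, the expected welfare is exactly the Poisson-form expression, concavity follows from \cite{DughmiSTOC11}, and the $O(\alpha)$ loss (hence $O(\rho)$ unweighted, $O(\rho\log n)$ weighted) comes from the scaling by the per-channel integrality gap $\alpha$ verified by the single-channel LP-rounding algorithms of \cite{Hoefer11} (or the greedy of \cite{Akcoglu00}) --- not, as you write, from {\sc Allocate}(1)/(2), which belong to the symmetric-valuation section and are not used here.

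Two further points you gloss over are also essential. First, ``exactly (up to negligible error)'' is not enough: approximate maximization over the range breaks truthfulness, so the paper simulates the exact optimizer in expected polynomial time (Algorithm~\ref{alg:simulation}) by drawing a random precision level $r$ with $\Pr{r=t}=2^{-t}$, computing $\delta^t$-estimates of the convex-program optimum, and calibrating the per-user removal probabilities so the realized marginal is exactly $1-\e^{-x^\ast_{v,j}/(2\alpha)}$. Second, the ``favorable conditioning'' needed for those $\delta$-estimates cannot just be assumed to hold at the optimum; the paper enforces it by modifying the mechanism (Algorithm~\ref{alg:modifiedsimulation}), mixing in a trivial allocation with probability $\mu=2^{-nk}$ so the objective acquires a curvature lower bound $\lambda$ while the approximation factor is unaffected. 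Without the exact-marginal calibration, the exact-simulation step, and the curvature perturbation, your proposal does not yield a truthful-in-expectation mechanism in expected polynomial time.
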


\subsection{Defining the Range}
We define the distributional range \calD\ in this section and discuss
why it is sufficiently large to get good approximations. Our starting
point are all fractional solutions $x$ fulfilling the following linear
constraints:
\begin{subequations}
  \begin{align}
    \sum_{\substack{u \in V \\ \pi(u) < \pi(v)}} \bar{w}(u, v) \cdot x_{u, j} \leq \rho  && \text{for all $v \in V$, $j \in [k]$} \label{eq:mrs-constraint-indindnumb} \\
    0 \leq x_{v, j} \leq 1 && \text{for all $v \in V$, $j \in [k]$} \label{eq:mrs-constraint-box}
  \end{align}
\end{subequations}

\begin{algorithm}
\For{$j \in [k]$}{
Draw $p_j$ uniformly for $[0, 1]$\;
  Decompose $(x_{v, j})_{v \in V}$ such that $x = \frac{1}{\alpha} \sum_{l} \lambda_l g_l$ and $\sum_l \lambda_l = 1$\;
  Let $l'$ be the minimal $l$ for which $\sum_{l < l'} \lambda_l < p_j$\;
  Allocate $g_{l'}$ tentitavely\;
  Remove each $v \in V$ from solution with a further probability of $p_{v,j} = \frac{1 - \e^{- x_{v, j} / (2\alpha)} }{ \frac{x_{v, j}}{\alpha}}$\;
}
\caption{Rounding scheme for a given solution $x$.}
\label{alg:roundingexact}
\end{algorithm}

For each channel we pick a feasible independent set separately in our
rounding scheme Algorithm~\ref{alg:roundingexact}. For each channel
$j$ the corresponding fractional solution $x_{\cdot,j}$ is decomposed
into polynomially many independent sets using parameter $\alpha$
discussed below. The algorithm selects one of these at random. The
decomposition can be computed in polynomial time using randomized
meta-rounding~\cite{Carr02,Lavi05} in combination with an appropriate
rounding scheme. Afterwards, each user $v$ is removed from the
solution by an independent random experiment rendering the total
probability for $v$ to receive channel $j$ to be exactly $1 -
\e^{-x_{v, j} / 2\alpha}$. Note that $p_{v,j}$ must be a valid
probability with $p_{v,j} \in [0,1]$. Here we observe that since
numerator and denominator are both positive, $p_{v,j}$ also
is. $p_{v,j} \le 1$ because $1 - \e^{x_{v, j} / (2\alpha)} \leq
\frac{x_{v, j}}{2 \alpha}$, for any $\alpha \ge 1$. Consequently, the
range \calD\ is given by all probability distributions resulting from
our rounding scheme applied to fractional solutions
of~\eqref{eq:mrs-constraint-indindnumb}
and~\eqref{eq:mrs-constraint-box}.

We have to specify the parameter $\alpha$, which ensures that
the decomposition of $x_{\cdot,j}$ exists. We interpret
$x_{\cdot,j}$ as solution to a linear program to maximize $\sum_{v
  \in V} a_v \cdot x_{v,j}$ subject to the
constraints~\eqref{eq:mrs-constraint-indindnumb}
and~\eqref{eq:mrs-constraint-box} for channel $j$. This is essentially
a linear relaxation for a single channel allocation problem with some
valuations $a_v$. We denote by $\alpha$ the integrality gap of this
program with respect to \emph{feasible} independent sets (Note that
the constraints~\eqref{eq:mrs-constraint-indindnumb} allow integer
solutions $x$ that represent infeasible independent sets). For this
program we can verify an integrality gap of $\alpha = O(\rho \cdot
\log n)$ for feasible independent sets using, e.g., the LP-rounding
algorithm for edge-weighted conflict graphs from~\cite{Hoefer11}. For
unweighted conflict graphs, the simpler LP-rounding algorithm
from~\cite{Hoefer11} yields $\alpha = O(\rho)$. Here, the simple
greedy algorithm of~\cite{Akcoglu00} (for details see
Section~\ref{sec:discuss} below) can even be shown to yield $\alpha =
\rho$.

For application of the randomized metarounding framework, we need an
algorithm verifying an integrality gap $\alpha$. This allows to
construct a decomposition LP and its dual, which can be solved in
polynomial time using the ellipsoid method, where the algorithm acts
as separation oracle (for details on this method
see~\cite{Carr02,Lavi05}). Note that $\alpha$ can merely be seen as a
parameter that serves to scale a fractional solution $x$ into a region
where a decomposition into (feasible) integral solutions exists --
independent of any objective function. The reason we interpret it as
integrality gap of an optimization problem is that the dual of the
decomposition LP allows an approximation algorithm verifying the gap
to be used to separate the dual and derive the required decomposition
in polynomial time. The reason we do not simply radically overestimate
$\alpha$ is that it does play a central role when we discuss the
approximation factor of our rounding scheme.

For a given distribution, the expected social welfare of the returned
allocation is exactly
\begin{equation}
  \sum_{v \in V} \sum_{T \subseteq [k]} b_v(T) \prod_{j \in T} (1 - \e^{-x_{v, j} / (2\alpha)}) \prod_{j \not\in T} \e^{-x_{v, j} / (2\alpha)} \label{eq:mrs-objective} \enspace.
\end{equation}
For the case of MRS functions, this function is concave, as we will
observe in more detail below. Therefore, the best distribution in the
range can be arbitrarily approximated by solving a convex program,
maximizing the concave objective (\ref{eq:mrs-objective}) subject to
linear constraints (\ref{eq:mrs-constraint-indindnumb}) and
(\ref{eq:mrs-constraint-box}).

As previously mentioned, the size of the range affects approximation
factor and tractability. Concerning the approximation factor, we can
show that the social welfare of the optimal allocation is at most an
$O(\alpha)$-factor above the expected social welfare of the best
distribution in the range.

\begin{lemma}
  The optimal distribution within the range is $O(\alpha)$-approximate
  in expectation when valuations are submodular. Hence, in
  expectation, the solution of our rouding scheme is a
  $O(\rho)$-approximation for unweighted and a $O(\rho \cdot \log
  n)$-approximation for edge-weighted conflict graphs.
\end{lemma}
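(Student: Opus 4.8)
The plan is to argue that the optimal solution $S^\ast$ to the channel allocation problem, when scaled down appropriately, yields a fractional point $x$ feasible for the constraints~\eqref{eq:mrs-constraint-indindnumb} and~\eqref{eq:mrs-constraint-box} whose image under the rounding scheme of Algorithm~\ref{alg:roundingexact} already achieves expected social welfare $\Omega(b(S^\ast)/\alpha)$; since the range \calD\ is exactly the set of distributions produced by the rounding scheme on feasible fractional points, and we optimize exactly over \calD, the best distribution in the range is at least as good. First I would take the optimal integral allocation $S^\ast$, where $S^\ast_{v}\subseteq[k]$ with each channel's user set feasible, and set $x^\ast_{v,j} = \mathbf{1}[j\in S^\ast_v]$. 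This $x^\ast$ satisfies the box constraints~\eqref{eq:mrs-constraint-box} trivially, and by the definition of the inductive independence number, for every $v$ and every $j$ we have $\sum_{u:\pi(u)<\pi(v)} \bar w(u,v)\, x^\ast_{u,j}\le\rho$ because the set of users holding channel $j$ in $S^\ast$ is an independent set. So $x^\ast$ is feasible for~\eqref{eq:mrs-constraint-indindnumb}–\eqref{eq:mrs-constraint-box}, hence lies in the domain over which we optimize, and thus the distribution obtained by running Algorithm~\ref{alg:roundingexact} on $x^\ast$ belongs to \calD.

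Next I would lower-bound the expected welfare of that distribution using the objective~\eqref{eq:mrs-objective}. On $x^\ast$, for every $v$ and every channel $j\in S^\ast_v$ the probability that $v$ keeps channel $j$ is $1-\e^{-1/(2\alpha)}\ge \frac{1}{2\alpha}\cdot\frac{1}{1}\cdot c$ for an absolute constant (using $1-\e^{-t}\ge t/2$ for $t\in[0,1]$, valid since $\alpha\ge 1$), i.e. each desired channel survives independently with probability at least $\tfrac{1}{4\alpha}$, say $q:=1-\e^{-1/(2\alpha)}=\Theta(1/\alpha)$. So for each user $v$ the random set $R_v$ of channels it ends up with is obtained by including each element of $S^\ast_v$ independently with probability $q$. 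Because $b_v$ is monotone submodular, a standard sampling bound gives $\Ex{b_v(R_v)}\ge q\cdot b_v(S^\ast_v)$ (this is the elementary fact that sampling each element of a set $S$ independently with probability $q$ retains at least a $q$ fraction of the value of $b_v(S)$ for monotone submodular $b_v$; it follows from the concavity of $t\mapsto b_v$ restricted along the segment, or directly from the Lovász-extension / correlation-gap argument). Summing over $v$ gives expected social welfare at least $q\cdot b(S^\ast)=\Omega(b(S^\ast)/\alpha)$.

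The main obstacle I anticipate is the submodular sampling step: I must make sure the removal events in the last line of Algorithm~\ref{alg:roundingexact} act \emph{independently across channels} on the value of a fixed user's set, and that the decomposition/selection step (which picks one independent set $g_{l'}$ per channel) does not hurt — but on the point $x^\ast$ each coordinate $x^\ast_{v,j}$ is $0$ or $1$, so the decomposition is degenerate (the unique independent set on channel $j$ is $S^\ast$ restricted to $j$, up to the scaling) and the tentative allocation contains $j\in S^\ast_v$ for $v$ with probability exactly $1$; the only randomness left is the independent per-$(v,j)$ removal, which is exactly the i.i.d.\ $q$-sampling needed. One should double-check the scaling: feasibility of $x^\ast$ for~\eqref{eq:mrs-constraint-indindnumb} needs the right-hand side $\rho$, which matches since independent sets have incoming weight $\le\rho$; no extra factor is lost here. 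Plugging in $\alpha=O(\rho)$ for unweighted conflict graphs and $\alpha=O(\rho\cdot\log n)$ for edge-weighted ones (as established above via the LP-rounding algorithms of~\cite{Hoefer11}) yields the stated $O(\rho)$- and $O(\rho\cdot\log n)$-approximations, and since the mechanism optimizes exactly over \calD\ the optimal distribution is at least this good, proving the lemma.
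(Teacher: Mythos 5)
Your overall route is the same as the paper's: map the optimal allocation $S^\ast$ to the integral point $x^\ast$ (feasible for \eqref{eq:mrs-constraint-indindnumb}--\eqref{eq:mrs-constraint-box} precisely because each channel's user set in $S^\ast$ is independent), observe that under the rounding each $j \in S^\ast_v$ reaches $v$ with probability $1-\e^{-1/(2\alpha)} = \Theta(1/\alpha)$, independently across channels, and then apply the monotone-submodular sampling bound; the paper does exactly this, citing Proposition~C.4 of \cite{DughmiSTOC11} and using concavity to get $1-\e^{-1/(2\alpha)} \ge (1-\e^{-1})/(2\alpha)$. So the skeleton and the conclusion are right.

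However, your resolution of the ``main obstacle'' is incorrect as stated. On the integral point the decomposition is not degenerate: what is decomposed is the scaled-down vector $x^\ast_{\cdot,j}/\alpha$, so the tentative allocation contains $v$ with probability $x^\ast_{v,j}/\alpha = 1/\alpha$, not $1$, and the subsequent keep-probability is $\alpha\bigl(1-\e^{-1/(2\alpha)}\bigr)$; only the product of the two equals $1-\e^{-1/(2\alpha)}$. (Your premise is also internally inconsistent: if the tentative inclusion probability were $1$, the stated removal rule would yield a marginal of $\alpha\bigl(1-\e^{-1/(2\alpha)}\bigr)$ rather than your $q$.) The repair is simple and is what the paper uses: by construction of Algorithm~\ref{alg:roundingexact}, the randomness (the draw $p_j$ and the removal coins) is independent across channels, and the scheme is calibrated so that the marginal probability that $v$ receives channel $j$ is exactly $1-\e^{-x_{v,j}/(2\alpha)}$; the correlation the decomposition induces among different users on the same channel is irrelevant for $\Ex{b_v(S_v)}$, which depends only on $v$'s own set of channels. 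With that correction your argument coincides with the paper's proof.
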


\begin{proof}
  The optimal allocation $S^\ast$ corresponds to a feasible solution
  $x^\ast$ of the convex program. However, $x^\ast$ is not always
  rounded to $S^*$ but also to worse allocations. We bound the
  expected welfare of the received allocation in terms of that of
  $S^*$. This then yields the upper bound on the approximation ratio.
  The probability of each user $v$ of being allocated channel $j$ in
  rounding is exactly $1 - \e^{-x^\ast_{v, j} / (2\alpha)}$. We denote
  $b(S^*) = \sum_{v \in V} b_v(S^*(v))$ and use Proposition~C.4 in
  \cite{DughmiSTOC11}. This yields an expected social welfare of the
  rounded allocation of at least $(1 - \e^{-1 / (2\alpha)}) \cdot
  b(S^*) \geq \left(1 - \e^{-1} \right) \cdot (2\alpha)^{-1} \cdot
  b(S^*)$ due to concavity. Thus, the result of rounding the best
  distribution is at most a factor of $O(\alpha)$ worse.
\end{proof}

\subsection{Sampling the MIDR Distribution}

The expected social welfare when rounding a fractional solution $x$ is
given by \eqref{eq:mrs-objective}. Fortunately, this function is
concave in terms of $x$ meaning an optimal fractional solution can be
approximated arbitrarily well in polynomial time. However, to make the
mechanism truthful in expectation, we are, in principle, required to
solve the given convex program \emph{exactly}.

Since this is not possible, Algorithm~\ref{alg:simulation} devises a
way to simulate an exact solution in expected polynomial time. It
returns an allocation in which each bidder has exactly the same probability as in
Algorithm~\ref{alg:roundingexact} to get a channel. It requires us to compute
$\delta$-estimates -- a solution $x$ of the convex program such that
$x^\ast_{v, j} - \delta \leq x_{v, j} \leq x^\ast_{v, j} + \delta$ for
all $v$, $j$. To simplify the presentation, we assume that this can be
computed in time $\poly(n, k, \log(1/\delta))$. For details on this
issue, see Section~\ref{sec:delta-estimates}.

\begin{algorithm}
\For{$j \in [k]$}{
Draw $p_j$ uniformly from $[0, 1)$ and let $r$ be the minimal $t$ for which $p_j \geq 1 - 2^{-t+1}$ \;
Set $x^0 = 0$\;
\For{t = 1, \ldots, r}{
Compute $\delta^t$-estimate $x^t$, where $\delta^t = 1 / (n\cdot 2^{t+1})$\;
Let $y^t_v = \max\{ y^{t-1}_v, x^t_{v, j} - \delta^t \}$\;
}
Decompose $y^r - y^{r-1}$ such that $y^r = \frac{1}{2 \alpha} \sum_l \lambda^{r, l} g^{r, l}$ with $\sum_l \lambda^{r, l} = 2^{-r}$\;
Let $l'$ be the minimum $l$ such that $p_j > 1 - 2^{-r-1} + \sum_{l < l'} \lambda_l$ \;
Tentatively allocate $g^{r, l'}$ \;
Remove each $v \in V$ from solution with further probability $p_{v,j} = \frac{2 \alpha \left( \e^{-y^{t-1}_v / (2\alpha)} - \e^{-y^t_v / (2\alpha)} \right)} {y^t_v - y^{t-1}_v}$
}

\caption{Simulating Algorithm~\ref{alg:roundingexact} with estimates
  of the optimal convex-program solution.}
\label{alg:simulation}
\end{algorithm}

\begin{proposition}
  The desired decomposition $(g^{r,l},\lambda^{r,l})_{l}$ exists and
  can be computed in polynomial time.
\end{proposition}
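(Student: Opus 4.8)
The plan is to derive the statement from the randomized meta-rounding of Carr--Vempala and Lavi--Swamy~\cite{Carr02,Lavi05}, exactly as that machinery is already used for the generic decomposition in Algorithm~\ref{alg:roundingexact}. The only new work is to verify that the particular increment vector $y^r-y^{r-1}$ built up in Algorithm~\ref{alg:simulation} is, after an appropriate rescaling, a feasible point of the single-channel relaxation that defines the integrality gap $\alpha$, and that the scaling is chosen so that the resulting independent-set weights sum to exactly $2^{-r}$.

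First I would record two structural facts about the iterates $0=y^0\le y^1\le\dots\le y^r$. Since $x^t$ is a $\delta^t$-estimate, $x^t_{v,j}-\delta^t\le x^\ast_{v,j}\le x^t_{v,j}+\delta^t$ for every $v$; plugging this into $y^t_v=\max\{y^{t-1}_v,\,x^t_{v,j}-\delta^t\}$ and inducting on $t$ yields $0\le y^{t-1}_v\le y^t_v\le x^\ast_{v,j}$ for all $t$ and $v$. Hence the increment $z:=y^r-y^{r-1}$ is non-negative and dominated coordinate-wise by $x^\ast_{\cdot,j}$, and, since $x^\ast$ satisfies the downward-closed constraints~\eqref{eq:mrs-constraint-indindnumb} and~\eqref{eq:mrs-constraint-box}, so does $z$. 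Moreover $y^t_v\ge x^t_{v,j}-\delta^t\ge x^\ast_{v,j}-2\delta^t$, so the estimate errors telescope to $z_v\le 2\delta^{r-1}$, i.e. the $r$-th increment is exponentially small in $r$. For accuracies $\delta^t$ decreasing geometrically in $t$ (the values in Algorithm~\ref{alg:simulation}, possibly up to extra factors polynomial in $\alpha$ and the edge weights, which still leave $\log(1/\delta^t)$ polynomial), this smallness makes the rescaled vector $\hat z:=c_r\,z$, with $c_r$ of order $2^{r}\alpha^2$, still satisfy~\eqref{eq:mrs-constraint-indindnumb} and~\eqref{eq:mrs-constraint-box}. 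Then, by definition of the integrality gap, $\hat z/\alpha\in\mathrm{conv}\{\mathbf 1_M : M\text{ a feasible independent set}\}$, say $\hat z/\alpha=\sum_l\mu_l\mathbf 1_{M_l}$ with $\mu_l\ge0$ and $\sum_l\mu_l=1$; setting $g^{r,l}:=M_l$ and $\lambda^{r,l}:=2^{-r}\mu_l$ gives $\sum_l\lambda^{r,l}=2^{-r}$ and $\frac1{2\alpha}\sum_l\lambda^{r,l}g^{r,l}=y^r-y^{r-1}$ by the choice of $c_r$. If the framework only produces a convex combination dominating $\hat z/\alpha$ rather than matching it, one shrinks each $M_l$ to a random subset: feasibility of independent sets is downward closed, so equality is reached without changing $\sum_l\lambda^{r,l}$. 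This establishes existence.

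For polynomial-time computability I would, following~\cite{Carr02,Lavi05}, form the decomposition LP with one non-negative variable per feasible independent set and the $n+1$ equalities $\frac1{2\alpha}\sum_l\lambda^{r,l}g^{r,l}=y^r-y^{r-1}$ and $\sum_l\lambda^{r,l}=2^{-r}$; it is feasible by the previous paragraph. Its dual has only $n+1$ variables, and a separation oracle for the dual is exactly an algorithm that, on non-negative vertex weights, returns a feasible independent set of weight at least $1/\alpha$ times the optimum of the single-channel relaxation --- i.e. an algorithm ``verifying the integrality gap $\alpha$'', such as the LP-rounding algorithms of~\cite{Hoefer11} (yielding $\alpha=O(\rho\log n)$ for weighted and $\alpha=O(\rho)$ for unweighted conflict graphs) or the greedy algorithm of~\cite{Akcoglu00}. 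The ellipsoid method on the dual makes only polynomially many oracle calls; the independent sets it encounters span a polynomial-size restriction of the primal, and solving that restriction returns a decomposition with polynomially many terms. Every quantity involved has bit length $\mathrm{poly}(n,k,\log(1/\delta^r))=\mathrm{poly}(n,k,r)$, so this runs in time polynomial in $n$, $k$ and $r$; and since level $r$ is reached only with probability $2^{-r}$, it contributes only a polynomial amount to the overall expected running time.

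The step I expect to be the main obstacle is reconciling the scaling: one has to pick $c_r$ so that the independent-set weights sum to \emph{exactly} $2^{-r}$ while $\hat z=c_r\,z$ still obeys both~\eqref{eq:mrs-constraint-indindnumb} and~\eqref{eq:mrs-constraint-box}. This is precisely what forces the geometric decay of the estimate accuracies $\delta^t$, and it is harmless for efficiency because producing a $\delta^t$-estimate is assumed to cost only $\mathrm{poly}(n,k,\log(1/\delta^t))$ time.
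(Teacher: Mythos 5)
The central step of your plan --- rescale the increment $z=y^r-y^{r-1}$ by $c_r\sim 2^{r}\alpha^2$ and feed it to the meta-rounding decomposition --- breaks at $r=1$. Your smallness bound $z_v\le 2\delta^{r-1}$ has no meaning there: $y^0=0$ is not an estimate of anything (there is no $\delta^0$), so $z=y^1$ can have coordinates as large as $x^\ast_{v,j}$, i.e.\ close to $1$, and can make constraint~\eqref{eq:mrs-constraint-indindnumb} essentially tight. Multiplying such a vector by any $c_1>1$ destroys feasibility of~\eqref{eq:mrs-constraint-box} and~\eqref{eq:mrs-constraint-indindnumb}, and no adjustment of the accuracy schedule $\delta^t$ can repair this, since the magnitude of $y^1$ has nothing to do with estimation error. (Your value of $c_r$ also comes from reading the typo-laden display in Algorithm~\ref{alg:simulation} literally; the identity the later probability computation actually uses is $\sum_l\lambda^{r,l}g^{r,l}=(y^r-y^{r-1})/(2\alpha)$, which would force $c_r=2^{r-1}$ --- and hence $c_1=1$, i.e.\ no rescaling at all.) The paper therefore treats $r=1$ separately: since $y^1\le x^\ast$ coordinate-wise and the constraints are downward closed, $y^1$ itself is LP-feasible and is decomposed directly via the decomposition LP, with the resulting weights halved so that they sum to $2^{-1}$.

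For $r\ge 2$ your route also needs repair, and the paper avoids it entirely. $\ell_\infty$-smallness of the increment controls the box constraints after upscaling, but constraint~\eqref{eq:mrs-constraint-indindnumb} involves the edge weights $\bar w(u,v)$, which are not bounded in terms of $\rho$ for pairs that can never coexist in an independent set; with the schedule $\delta^t=1/(n2^{t+1})$ as stated, the upscaled increment can violate~\eqref{eq:mrs-constraint-indindnumb} in weighted graphs, and your proposed fix (shrinking the $\delta^t$ by weight-dependent factors) modifies the algorithm the proposition is about. The paper instead exploits that $0\le y^r_v-y^{r-1}_v\le 1/(n2^{r-1})$ to decompose the increment trivially into the $n$ singleton independent sets, with weights $\lambda^{r,l}=\frac{1}{2\alpha}(y^r_{v_l}-y^{r-1}_{v_l})$ summing to at most $2^{-r}$, and assigns the leftover mass to the all-zero allocation. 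This also dissolves the ``exact sum $2^{-r}$'' obstacle you single out: padding with the empty, trivially feasible solution achieves equality without any rescaling, and no ellipsoid/decomposition-LP computation is needed beyond level $r=1$.
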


\begin{proof}
  We distinguish between the two cases $r = 1$ and $r \geq 2$.

  In the case of $r=1$, $y^r$ fulfills
  equations~\eqref{eq:mrs-constraint-indindnumb}
  and~\eqref{eq:mrs-constraint-box}. Here we can apply the
  decomposition as described above. Using the algorithms
  from~\cite{Hoefer11} verifying integrality gaps of $\alpha =
  O(\rho)$ or $\alpha=O(\rho \cdot \log n)$, we can solve the
  decomposition LP of the meta-rounding framework and decompose $y^b =
  \frac{1}{\alpha} \tilde{\lambda}^{r, l} g^{r, l}$ with $\sum_l
  \tilde{\lambda}^{r, l} = 1$ where $g^{r, l}$ are integral solutions
  corresponding to independent sets. The running time is polynomial in
  $n$ and $k$. Setting $\lambda^{r, l} = \frac{1}{2}
  \tilde{\lambda}^{r, l}$ for all $l$ yields the desired composition.

  For the case $r \geq 2$, we use the fact that $x^{r-1}$ is already a
  $1/(n2^r)$-estimate. This yields that $0 \leq y^r_v -
  y^{r-1}_v \leq 1/(n2^{r-1})$. Therefore, it is possible to decompose
  $y^r - y^{r-1}$ to the trivial single-vertex independent
  sets. Formally, we consider an arbitrary ordering of the users $v_1,
  \ldots, v_n$, e.g. the one given by $\pi$. We set $g^{r, l}_{v_l} =
  1$ and $g^{r, l}_v = 0$ if $v_l \neq v$. The weights are set to
  $\lambda^{r, l} = \frac{1}{2 \alpha} (y^r_{v_l} -
  y^{r-1}_{v_l})$. This yields that $\sum_{l=1}^n \lambda^{r, l} \leq
  \sum_{l=1}^n \frac{1}{2\alpha} \cdot \frac{1}{n 2^{r-1}} \leq 2^{-r}$. The
  remaining weight is assigned to the all-zero fractional solution.
\end{proof}

\begin{proposition}
  For the probability of being removed we have $p_{v,j}\in [0,1]$.
\end{proposition}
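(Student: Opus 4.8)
The plan is to bound $p_{v,j}$ from below and above separately, using only monotonicity of the sequence $(y^t_v)_t$ produced by Algorithm~\ref{alg:simulation} and two elementary facts about the exponential function.

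First I would record the structural observation that drives everything: since $y^0_v = 0$ and $y^t_v = \max\{y^{t-1}_v,\, x^t_{v,j} - \delta^t\}$, the sequence $(y^t_v)_t$ is non-negative and non-decreasing in $t$; hence the two quantities occurring in the definition of $p_{v,j}$ satisfy $0 \le y^{t-1}_v \le y^t_v$. Introducing the abbreviations $a = y^{t-1}_v/(2\alpha)$ and $b = y^t_v/(2\alpha)$, we have $0 \le a \le b$, and after cancelling the common factor $2\alpha$ the removal probability becomes the divided difference $p_{v,j} = \frac{\e^{-a} - \e^{-b}}{b - a}$, where the value is understood by continuity as $\e^{-a}$ in the degenerate case $a = b$.

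For the lower bound, monotonicity of $x \mapsto \e^{-x}$ gives $\e^{-a} \ge \e^{-b}$, so the numerator is non-negative and the denominator is positive; therefore $p_{v,j} \ge 0$ (and in the degenerate case $p_{v,j} = \e^{-a} > 0$). For the upper bound I would factor $\e^{-a} - \e^{-b} = \e^{-a}\bigl(1 - \e^{-(b-a)}\bigr)$ and apply the standard inequality $1 - \e^{-x} \le x$ with $x = b - a \ge 0$, together with $\e^{-a} \le 1$ (valid since $a \ge 0$), to conclude $\e^{-a} - \e^{-b} \le b - a$, i.e.\ $p_{v,j} \le 1$. Equivalently, by the mean value theorem the divided difference equals $\e^{-\xi}$ for some $\xi \in [a,b] \subseteq [0,\infty)$, which lies in $(0,1]$.

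There is no genuinely hard step here: the whole argument is two one-line estimates. The only point that calls for a word of care is the degenerate case $y^{t-1}_v = y^t_v$, in which the displayed formula is a $0/0$ expression and must be read as its limiting value $\e^{-y^{t-1}_v/(2\alpha)}$; since $y^{t-1}_v \ge 0$, this limit still belongs to $(0,1]$, so the claim holds in all cases.
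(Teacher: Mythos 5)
Your argument is correct and follows essentially the same route as the paper's proof: monotonicity $0 \le y^{t-1}_v \le y^t_v$ gives non-negativity, and factoring $\e^{-y^{t-1}_v/(2\alpha)} - \e^{-y^t_v/(2\alpha)} = \e^{-y^{t-1}_v/(2\alpha)}\bigl(1 - \e^{-(y^t_v - y^{t-1}_v)/(2\alpha)}\bigr)$ together with $\e^{-y^{t-1}_v/(2\alpha)} \le 1$ and $1 - \e^{-x} \le x$ yields the upper bound of $1$. Your explicit treatment of the degenerate case $y^{t-1}_v = y^t_v$ (reading the quotient as its limit $\e^{-y^{t-1}_v/(2\alpha)}$) is a small point of care the paper glosses over, but it does not change the substance of the argument.
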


\begin{proof}
  Since $y^{t-1}_v \leq y^t_v$ for all $v \in V$, the probability is
  at least $0$. Furthermore, we have
  \begin{align*}
    \frac{2 \alpha \left( \e^{-y^{t-1}_v / (2\alpha)} - \e^{-y^t_v / (2\alpha)} \right)} {y^t_v - y^{t-1}_v} & = \frac{2 \alpha \e^{-y^{t-1}_v / (2\alpha)} \left( 1 - \e^{- ( y^t_v - y^{t-1}_v ) / (2\alpha)} \right)} {y^t_v - y^{t-1}_v} \\
    & \leq \frac{2 \alpha \left( 1 - \e^{- ( y^t_v - y^{t-1}_v ) / (2\alpha)} \right)} {y^t_v - y^{t-1}_v} \\
    & \leq 1 \enspace.
  \end{align*}
\end{proof}

\begin{proposition}
  For each user $v \in V$ and each channel $j \in [k]$ the probability
  to receive $j$ is exactly $1 - \e^{-x^\ast_{v, j} / (2\alpha)}$.
\end{proposition}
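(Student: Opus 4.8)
The plan is to track, for a fixed channel $j$ and a fixed user $v$, the total probability that $v$ ends up holding channel $j$ when Algorithm~\ref{alg:simulation} is run, and to show it telescopes to exactly $1 - \e^{-x^\ast_{v,j}/(2\alpha)}$. The key observation is that the random threshold $p_j$ drawn uniformly from $[0,1)$ is used in two roles: it determines the geometric index $r$ (via $p_j \ge 1 - 2^{-r+1}$, so $r$ has the distribution $\Pr{r = t} = 2^{-t}$), and, conditioned on $r$, the residual $p_j - (1 - 2^{-r+1})$ is uniform on an interval of length $2^{-r+1}$ and is used to pick the independent set $g^{r,l'}$ in the decomposition of $y^r - y^{r-1}$. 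First I would condition on the event $r = t$ and compute the probability that $v$ is tentatively allocated channel $j$: since $y^t = \frac{1}{2\alpha}\sum_l \lambda^{t,l} g^{t,l}$ with $\sum_l \lambda^{t,l} = 2^{-t}$, and the residual uniform variable of length $2^{-t+1}$ selects $g^{t,l'}$ proportionally to $\lambda^{t,l}$, the conditional probability that the chosen independent set contains $v$ is $\frac{\sum_{l : g^{t,l}_v = 1}\lambda^{t,l}}{2^{-t+1}} = \frac{y^t_v}{2^{-t+1}}$ — wait, one must be careful: the decomposition is of $y^r - y^{r-1}$, not of $y^r$, so the relevant mass is $y^t_v - y^{t-1}_v$ over the full interval $2^{-t+1}$, but the allocation-selection inequality $p_j > 1 - 2^{-r-1} + \sum_{l<l'}\lambda_l$ is only feasible on a sub-interval of length $2^{-t} = \sum_l \lambda^{t,l}$ plus the leftover zero solution. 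I would carefully unpack these index bookkeeping details so that, conditioned on $r = t$, the probability $v$ is tentatively allocated $j$ equals $\frac{y^t_v - y^{t-1}_v}{2^{-t}} \cdot \frac{1}{?}$ — the precise normalization is exactly what makes the telescoping work, and getting it right is the first half of the argument.

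Next I would incorporate the final removal step. Conditioned on being tentatively allocated and on $r = t$, user $v$ is kept with probability $1 - p_{v,j} = 1 - \frac{2\alpha(\e^{-y^{t-1}_v/(2\alpha)} - \e^{-y^t_v/(2\alpha)})}{y^t_v - y^{t-1}_v}$. Multiplying the tentative-allocation probability by this keep-probability, the factor $y^t_v - y^{t-1}_v$ cancels, and I would be left with a clean expression proportional to $2^{-t} - 2\alpha(\e^{-y^{t-1}_v/(2\alpha)} - \e^{-y^t_v/(2\alpha)})$ or, after combining with $\Pr{r = t} = 2^{-t}$, with a term of the form $\e^{-y^{t-1}_v/(2\alpha)} - \e^{-y^t_v/(2\alpha)}$ — hmm, the powers of $2$ need to be reconciled against $\Pr{r=t}$. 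The intended punchline is that summing the per-$r$ contributions over $t = 1, 2, 3, \dots$ telescopes: $\sum_{t \ge 1}\bigl(\e^{-y^{t-1}_v/(2\alpha)} - \e^{-y^t_v/(2\alpha)}\bigr)$. Since $y^0 = 0$, the first term is $\e^0 = 1$, and since $y^t_v \to x^\ast_{v,j}$ as $t \to \infty$ (because $x^t$ is a $\delta^t$-estimate with $\delta^t = 1/(n 2^{t+1}) \to 0$ and $y^t_v = \max\{y^{t-1}_v, x^t_{v,j} - \delta^t\}$ is nondecreasing and bounded, hence convergent, with limit pinched to $x^\ast_{v,j}$), the tail term tends to $\e^{-x^\ast_{v,j}/(2\alpha)}$. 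Thus the total telescopes to $1 - \e^{-x^\ast_{v,j}/(2\alpha)}$, as claimed. I would also note that the event ``$r = \infty$'' has probability zero, so it contributes nothing.

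The main obstacle I anticipate is the index/interval bookkeeping in the tentative-allocation step: one has to verify that the sub-interval of $[1 - 2^{-r+1}, 1 - 2^{-r})$ on which $p_j$ falls decomposes correctly into the pieces $\lambda^{r,l}$ (covering total length $2^{-r}$, corresponding to the real independent sets from the decomposition of $y^r - y^{r-1}$) followed by the leftover length (corresponding to the all-zero solution, in which $v$ gets nothing), and that this matches the inequality ``$p_j > 1 - 2^{-r-1} + \sum_{l<l'}\lambda_l$'' in the algorithm — in particular checking the offset $1 - 2^{-r-1}$ is consistent with the interval endpoints. Once the conditional tentative-allocation probability is pinned down exactly as $\frac{y^r_v - y^{r-1}_v}{2^{-r}}$ (restricted appropriately, and $0$ on the all-zero part, which automatically has the right measure since $\sum_l \lambda^{r,l}_v$ summed against $v$'s membership equals $y^r_v - y^{r-1}_v \le 2^{-r}$), the rest is the routine cancellation and telescoping sketched above. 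A secondary point to check is that $y^r_v - y^{r-1}_v$ may be zero (when the estimate did not increase $v$'s coordinate), in which case $p_{v,j}$ as written is $0/0$; I would handle this by the convention that the tentative-allocation probability is then also zero, so $v$ simply contributes nothing at that level, consistent with the telescoping sum (the term $\e^{-y^{t-1}_v/(2\alpha)} - \e^{-y^t_v/(2\alpha)}$ is genuinely $0$ there).
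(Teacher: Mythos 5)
Your overall plan is the same as the paper's: condition on $r=t$ (with $\Pr{r=t}=2^{-t}$), compute the conditional probability that $v$ is tentatively allocated from the decomposition, fold in the per-user removal experiment, and telescope using $y^0_v=0$ and $y^t_v \to x^\ast_{v,j}$ (your convergence and $0/0$ remarks are fine). But the two computations you explicitly leave open are precisely the content of the paper's proof, and the one concrete attempt you make at the second is wrong. First, the tentative-allocation probability: conditioned on $r=t$, the threshold $p_j$ is uniform over an interval of length $2^{-t}$, and the decomposition of the increment satisfies $\sum_{l\colon g^{t,l}_v=1}\lambda^{t,l} = (y^t_v-y^{t-1}_v)/(2\alpha)$, padded by the all-zero solution so that $\sum_l \lambda^{t,l}=2^{-t}$; hence $\Pr{g^{t,l'}_v=1 \mid r=t} = 2^t\,(y^t_v-y^{t-1}_v)/(2\alpha)$. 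Your candidate $(y^t_v-y^{t-1}_v)/2^{-t}$ drops the $1/(2\alpha)$, and without that factor nothing cancels against $p_{v,j}$.

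Second, and more importantly, you keep $v$ with probability $1-p_{v,j}$. Under that reading the conditional probability of getting $j$ is $2^t(y^t_v-y^{t-1}_v)/(2\alpha) - 2^t\bigl(\e^{-y^{t-1}_v/(2\alpha)}-\e^{-y^t_v/(2\alpha)}\bigr)$, and summing against $\Pr{r=t}=2^{-t}$ gives $x^\ast_{v,j}/(2\alpha) - 1 + \e^{-x^\ast_{v,j}/(2\alpha)}$, not the claimed $1-\e^{-x^\ast_{v,j}/(2\alpha)}$: the extra linear term does not vanish, so the telescoping you assert simply does not occur with this reading. The paper's proof multiplies the tentative-allocation probability by $p_{v,j}$ itself; that is, the displayed quantity is the \emph{survival} probability of a tentatively allocated user (the word ``remove'' in the pseudocode is an inconsistency of the paper — the same reading is already forced in Algorithm~\ref{alg:roundingexact}, where the stated total probability $1-\e^{-x_{v,j}/(2\alpha)}$ only comes out if $p_{v,j}$ is the keep probability). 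With that, the product is exactly $2^t\bigl(\e^{-y^{t-1}_v/(2\alpha)}-\e^{-y^t_v/(2\alpha)}\bigr)$, and the telescoping plus the limit $y^t_v \to x^\ast_{v,j}$ that you sketch finishes the proof. So: right skeleton, but the normalization and the keep-versus-remove step — which you flag as ``to be reconciled'' — are a genuine gap, and as written your accounting yields the wrong probability.
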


\begin{proof}
  Let $r$ be defined as in the algorithm. Let us first consider the
  conditional probability of getting the channel given that $r = t$
  for some $t$.
  \begin{align*}
    \Pr{\text{$v$ gets $j$} \mid r = t} & = \Pr{g^{r, l'}_v = 1 \mid r = t} \cdot \frac{2 \alpha \left( \e^{-y^{t-1}_v / (2\alpha)} - \e^{-y^t_v / (2\alpha)} \right)} {y^t_v - y^{t-1}_v} \\
    & = \frac{2^t \left( y^{t-1}_v - y^t_v \right)}{2 \alpha} \cdot \frac{2 \alpha \left( \e^{-y^{t-1}_v / (2\alpha)} - \e^{-y^t_v / (2\alpha)} \right)} {y^t_v - y^{t-1}_v} \\
    & = 2^t \left( \e^{-y^{t-1}_v / (2\alpha)} - \e^{-y^t_v /
        (2\alpha)} \right)
  \end{align*}
  We get
  \[
  \Pr{\text{$v$ gets $j$}} = \sum_{t = 1}^\infty \Pr{r = t} \cdot
  \Pr{\text{$v$ gets $j$} \mid r = t} = \sum_{t = 1}^\infty 2^{-t}
  \cdot 2^t \left( \e^{-y^{t-1}_v / (2\alpha)} - \e^{-y^t_v /
      (2\alpha)} \right) = 1 - \e^{-x^\ast_{v, j}} \enspace,
  \]
  where the last step is due to the fact that $y^t_v$ converges to
  $x^\ast_{v, j}$ as $t \to \infty$.
\end{proof}

\begin{proposition}
  Assuming that the $\delta$-estimates can be computed in time
  $\poly(n, k, \log(1/\delta))$, the expected running time of
  Algorithm~\ref{alg:simulation} is polynomial in $n$ and $k$.
\end{proposition}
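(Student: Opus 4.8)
The plan is to bound the expected running time by showing that the number of iterations $r$ of the inner loop is a geometric random variable, so the expected total work is a convergent series. First I would observe that $r$ is determined in the algorithm by $p_j$, namely $r$ is the smallest $t$ with $p_j \geq 1 - 2^{-t+1}$; since $p_j$ is uniform on $[0,1)$, we have $\Pr{r = t} = 2^{-t}$, so $r$ is geometrically distributed. Conditioned on $r = t$, the algorithm computes $\delta^s$-estimates for $s = 1, \ldots, t$, where $\delta^s = 1/(n \cdot 2^{s+1})$, so the dominant cost is the last estimate with $\log(1/\delta^t) = O(t + \log n)$, which by assumption takes time $\poly(n, k, t + \log n) = \poly(n, k) \cdot \poly(t)$. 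The remaining steps for a fixed $j$ and $r = t$ — the decomposition of $y^t - y^{t-1}$ (which by the previous proposition is either a single meta-rounding decomposition when $t = 1$ or a trivial single-vertex decomposition when $t \geq 2$, both polynomial), selecting $l'$, the tentative allocation, and the removal step — are all $\poly(n, k)$ and do not grow with $t$.

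Next I would assemble the bound. For a single channel $j$, the expected work is
\[
\sum_{t=1}^\infty \Pr{r = t} \cdot \poly(n,k) \cdot \poly(t) = \poly(n,k) \cdot \sum_{t=1}^\infty 2^{-t} \poly(t) = \poly(n, k) \cdot O(1),
\]
since $\sum_t 2^{-t} \poly(t)$ converges. The outer loop runs over the $k$ channels, and since the $p_j$ are drawn independently, the expected total running time is at most $k$ times the per-channel expectation, which is again $\poly(n, k)$. I would also note the one subtlety that when $t = 1$ the decomposition step invokes the randomized meta-rounding framework, whose separation oracle is one of the (randomized) algorithms from~\cite{Hoefer11}; but as already remarked in Section~\ref{sec:symmetric}, these can be derandomized, so this sub-step has a worst-case polynomial bound and does not introduce a second layer of expectation.

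The main obstacle I anticipate is making the reliance on the $\delta$-estimate assumption fully rigorous: we are assuming $\delta$-estimates are computable in time $\poly(n, k, \log(1/\delta))$, and the whole argument hinges on the fact that even the finest estimate needed (the one at level $r$) only requires precision $\delta^r = 1/(n \cdot 2^{r+1})$, so $\log(1/\delta^r)$ grows only \emph{linearly} in $r$ — hence the per-iteration cost is polynomial in $r$ rather than exponential. This is what makes the geometric tail $\Pr{r = t} = 2^{-t}$ dominate the polynomial growth of the cost and yields a finite expectation. The deferred justification of the $\delta$-estimate assumption itself (the convexity and conditioning properties of the objective~\eqref{eq:mrs-objective} that let ellipsoid-type methods achieve this precision) is handled separately in Section~\ref{sec:delta-estimates}, so here I would simply invoke it and keep the focus on the expectation calculation above.
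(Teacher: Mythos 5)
Your proposal is correct and follows essentially the same route as the paper's proof: condition on $r=t$, use the assumption with $\log(1/\delta^t)=O(t+\log n)$ to bound the iteration cost by $\poly(n,k,t)$, note the remaining steps are $\poly(n,k)$, and sum the geometric series $\sum_t 2^{-t}\poly(n,k,t)=\poly(n,k)$. Your extra remarks (summing over the $k$ channels and handling the meta-rounding decomposition's polynomial running time, which the paper delegates to the preceding proposition) are fine refinements but not a different argument.
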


\begin{proof}
  Let us first consider the running time for the case that $r = t$ for
  some fixed $t$. If this case the $\delta$-estimates in lines 5--7
  can be computed in time $\sum_{i=1}^t \poly(n, k, \log(2^{i+1} n)) =
  \poly(n, k, t)$. The remaining computations take time $\poly(n,
  k)$. As a consequence, the expected running time of the algorithm is
  $\sum_{t=1}^\infty \Pr{r = t} \cdot \poly(n, k, t) =
  \sum_{t=1}^\infty 2^{-t} \cdot \poly(n, k, t) = \poly(n, k)$, where
  the last step is due to a geometric series.
\end{proof}

\subsection{Computing $\delta$-Estimates}
\label{sec:delta-estimates}

Algorithm~\ref{alg:simulation} only runs in expected polynomial time
when assuming that a $\delta$-estimate of the convex program can be
computed in time $\poly(n, k, \log(1/\delta))$. The reasoning why we
assume this is essentially the same as
in~\cite{DughmiSTOC11}. However, for the sake of completeness, we
present the most important steps in this section.

First of all, we have to observe that the objective function is
concave when all player valuations are MRS.

\begin{lemma}
  Our rounding scheme is convex when player valuations are MRS.
\end{lemma}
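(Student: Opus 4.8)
The goal is to show that the expected social welfare \eqref{eq:mrs-objective}, viewed as a function of the fractional point $x=(x_{v,j})_{v\in V,\,j\in[k]}$ ranging over the polytope cut out by \eqref{eq:mrs-constraint-indindnumb} and \eqref{eq:mrs-constraint-box}, is concave; this is exactly what it means for the rounding scheme to be convex, and the tractability of the resulting convex program is then treated in the following subsections. The plan is a two-step reduction, ending in an appeal to the matroid structure of MRS valuations.

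\emph{Step 1 (reduce to single-user multilinear extensions).} In Algorithm~\ref{alg:roundingexact} the $k$ channels are handled in independent iterations of the outer loop, so for a fixed user $v$ the events ``$v$ receives channel $j$'' are mutually independent over $j\in[k]$, each occurring with probability exactly $1-\e^{-x_{v,j}/(2\alpha)}$ (this marginal is precisely what the meta-rounding decomposition of $x_{\cdot,j}$ together with the calibrated removal probability $p_{v,j}$ is engineered to produce; the decomposition introduces correlations only among distinct users sharing a channel, never within one user across channels). Hence \eqref{eq:mrs-objective} equals $\sum_{v\in V}\widehat b_v\!\big((1-\e^{-x_{v,j}/(2\alpha)})_{j\in[k]}\big)$, where $\widehat b_v$ denotes the multilinear extension of $b_v$. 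Since the summands depend on pairwise disjoint blocks of variables, concavity of the whole objective follows once we know that each single-user map $\xi\mapsto \widehat b_v\!\big((1-\e^{-\xi_j/(2\alpha)})_{j\in[k]}\big)$ is concave on $\RR_{\ge 0}^k$.

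\emph{Step 2 (reduce to one matroid rank function) and Step 3 (the matroid fact).} Because $b_v$ is MRS, write $b_v=\sum_\ell w_\ell u_\ell$ with $w_\ell\ge 0$ and each $u_\ell$ a matroid rank function; the multilinear extension is linear in the set function, so $\widehat b_v=\sum_\ell w_\ell \widehat u_\ell$, and a nonnegative combination of concave functions is concave. It therefore remains to prove: for a matroid rank function $u$ on $[k]$ with multilinear extension $G$ and a constant $c\ge 1$, the function $H(\xi):=G\!\big(1-\e^{-\xi_1/c},\dots,1-\e^{-\xi_k/c}\big)$ is concave on $\RR_{\ge 0}^k$ (here $c=2\alpha$). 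This is exactly the property that makes the Poisson rounding of~\cite{DughmiSTOC11} a convex rounding scheme, and I would invoke it. As a sanity check one records that, setting $q_j=1-\e^{-\xi_j/c}$ and using the chain rule, multilinearity forces $\partial^2 G/\partial q_j^2=0$, so $\partial^2 H/\partial\xi_j^2=-c^{-2}\e^{-\xi_j/c}\,\partial G/\partial q_j\le 0$ by monotonicity of $u$, and $\partial^2 H/\partial\xi_i\partial\xi_j=c^{-2}\e^{-\xi_i/c}\e^{-\xi_j/c}\,\partial^2 G/\partial q_i\partial q_j\le 0$ by submodularity; thus every entry of $\nabla^2 H$ is nonpositive. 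Upgrading this to negative semidefiniteness of $\nabla^2 H$ — equivalently $\nabla^2 G\preceq \mathrm{diag}\!\big(\tfrac{\partial G/\partial q_i}{1-q_i}\big)$ at the point $q$ — is the only place where the matroid-rank structure is genuinely exploited, and it is carried out in~\cite{DughmiSTOC11}.

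The step I expect to be the real obstacle here is \emph{Step 1}: unlike ordinary combinatorial auctions, where an item goes to a single bidder, a channel is handed to an entire feasible independent set, so one has to argue that the per-channel decomposition-and-removal procedure really realizes the clean product form $\prod_{j\in T}(1-\e^{-x_{v,j}/(2\alpha)})\prod_{j\notin T}\e^{-x_{v,j}/(2\alpha)}$ appearing in \eqref{eq:mrs-objective}. This is exactly where the parameter $\alpha$ enters: the randomized meta-rounding of~\cite{Carr02,Lavi05} applied to the single-channel packing LP \eqref{eq:mrs-constraint-indindnumb}--\eqref{eq:mrs-constraint-box} requires scaling by the integrality gap $\alpha$ (namely $O(\rho)$ for unweighted and $O(\rho\cdot\log n)$ for edge-weighted conflict graphs) for a decomposition into \emph{feasible} independent sets to exist at all; once that decomposition and the calibration of $p_{v,j}$ are established, Step 1 goes through, and the concavity of the objective then follows from Steps 2 and 3. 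The concavity computation itself is thus essentially inherited from~\cite{DughmiSTOC11}; the new content is making the reduction of Step 1 legitimate for conflict graphs.
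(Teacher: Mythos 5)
Your proposal is correct and follows essentially the same route as the paper's own proof: split the objective by linearity of expectation, use that each user receives channel $j$ independently across channels with probability exactly $1-\e^{-x_{v,j}/(2\alpha)}$ to obtain the product form, invoke the concavity result of~\cite{DughmiSTOC11} for MRS valuations, and absorb the $2\alpha$ scaling via composition with a linear map. Your extra remarks (the explicit independence-across-channels argument, the reduction to a single matroid rank function, and the Hessian sanity check) are just elaborations of the same citation-based argument.
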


\begin{proof}
  Due to $\Ex{\sum_{v \in V} b_v(S_v)} = \sum_{v \in V}
  \Ex{b_v(S_v)}$, the result follows when $\Ex{b_v(S_v)}$ is concave
  for all $v$. By construction the probability for each user to be
  allocated channel $j$ is exactly $1 - \e^{-x_{v, j} /
    (2\alpha)}$. Therefore each $\Ex{b_v(S_v)}$ can be written as
  \[
  \sum_{T \subseteq [k]} b_v(T) \prod_{j \in T} (1 - \e^{-x_{v, j} /
    (2\alpha)}) \prod_{j \not\in T} \e^{-x_{v, j} / (2\alpha)}
  \enspace.
  \]
  We only have to prove that this function is concave over $(0,1)^k$.

  Dughmi et al.~\cite{DughmiSTOC11} show that the function $G\colon
  \RR^k \to \R$ with
  \[
  G(x_1, \ldots, x_k) = \sum_{T \subseteq [k]} b(T) \prod_{j \in T} (1
  - \e^{-x_j}) \prod_{j \not\in T} \e^{-x_j}
  \]
  is concave over $x \in (0,1)^k$ when $b$ is MRS.

  For $\Ex{b_v(S_v)} = G(x / (2\alpha))$ this also yields concavity
  since for any $\xi \in [0,1]$
  \[
  G\left( \frac{\xi x + (1 - \xi) y}{(2\alpha)} \right) =
  G\left( \xi\frac{x}{(2\alpha)} + (1 - \xi)
    \frac{y}{(2\alpha)} \right) \geq \xi G\left(
    \frac{x}{(2\alpha)} \right) + (1 - \xi)
  G\left(\frac{y}{(2\alpha)} \right)\enspace.
  \]
\end{proof}

This immediately yields the following claim when taking into
consideration that the constraints are linear.

\begin{claim}
  There is an algorithm in the lottery-value oracle model that, given
  an instance of spectrum auctions with edge-weighted conflict graphs
  on $n$ bidders and $k$ channels and an approximiation parameter
  $\epsilon > 0$, runs in $\poly(n, k, \log(1/\epsilon))$ time and
  returns a $(1-\epsilon)$-approximate solution to the convex
  program. 
\end{claim}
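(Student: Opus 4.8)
The plan is to follow the template of Dughmi et al.~\cite{DughmiSTOC11}: cast the task as maximizing a concave function, whose values and gradients are accessible through the lottery-value oracle, over an explicitly given polytope, and then apply the ellipsoid method. By the preceding lemma the objective~\eqref{eq:mrs-objective} is concave, and the region cut out by~\eqref{eq:mrs-constraint-indindnumb} and~\eqref{eq:mrs-constraint-box} is a polytope whose facets are listed explicitly, so this is genuinely a convex program of dimension $nk$.

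First I would observe that the objective is \emph{exactly} computable with $n$ oracle calls: the per-bidder term $\Ex{b_v(S_v)}$ in~\eqref{eq:mrs-objective} is precisely the expected value of $b_v$ under the product distribution over $2^{[k]}$ whose marginal on channel $j$ is $q_{v,j} = 1-\e^{-x_{v,j}/(2\alpha)}$, i.e.\ a single lottery-value query. The gradient is exact as well: for fixed $v$, $\partial\Ex{b_v(S_v)}/\partial x_{v,j}$ equals $\tfrac{1}{2\alpha}\e^{-x_{v,j}/(2\alpha)}$ times the marginal value of $b_v$ at channel $j$ under the remaining marginals $q_v$, which is itself a difference of two lottery values (marginal on $j$ clamped to $1$ and to $0$); so the full gradient costs $O(nk)$ oracle calls. (One could also estimate gradients by finite differences, as in~\cite{DughmiSTOC11}.)

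Second I would establish the quantitative bounds that make the ellipsoid method terminate in $\poly(n,k,\log(1/\epsilon))$ time. The map $x_{v,j}\mapsto q_{v,j}$ has derivative at most $\tfrac{1}{2\alpha}\le\tfrac12$, and a lottery value is $\bigl(\max_j b_v(\{j\})\bigr)$-Lipschitz in its marginal vector, hence at most $b_v([k])$-Lipschitz; so after rescaling the objective by $1/\max_v b_v([k])$ both the objective and its gradient are bounded by a polynomial. One also needs the feasible polytope to be well-conditioned: it lies inside the unit cube by~\eqref{eq:mrs-constraint-box}, and it contains a ball of radius $1/\poly(n,k)$ because $x=\gamma\mathbf{1}$ is feasible for $\gamma=1/\poly(n,k)$ under the standing assumption --- as in~\cite{DughmiSTOC11} --- that the conflict-graph weights are polynomially bounded. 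With these ingredients the ellipsoid method for concave maximization runs as usual: at each step the center is either infeasible, in which case a violated constraint of~\eqref{eq:mrs-constraint-indindnumb}--\eqref{eq:mrs-constraint-box} separates it, or feasible, in which case the (approximate) gradient cut is valid by concavity; geometric shrinkage of the ellipsoid volume gives the claimed iteration count, each iteration costing $\poly(n,k)$ work plus $O(nk)$ oracle calls.

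The part that needs genuine care rather than citation is propagating the approximation error: the ellipsoid method returns a point that is $\epsilon$-close to an optimizer only in Euclidean distance (and the polytope must be mildly shrunk to keep iterates strictly feasible), so one must invoke the Lipschitz bound together with the well-conditioning of the polytope to convert this into a multiplicative $(1-\epsilon)$ guarantee on the objective value while keeping all polynomial factors under control. This is exactly the bookkeeping carried out in the appendix of~\cite{DughmiSTOC11}, and it transfers directly once their item-allocation polytope is replaced by~\eqref{eq:mrs-constraint-indindnumb}--\eqref{eq:mrs-constraint-box} and their Poisson marginals $1-\e^{-x_{v,j}}$ by our marginals $1-\e^{-x_{v,j}/(2\alpha)}$.
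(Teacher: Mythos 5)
Your proposal is correct and follows essentially the same route as the paper, which simply notes that the claim follows from the concavity of objective~\eqref{eq:mrs-objective} together with the linearity of constraints~\eqref{eq:mrs-constraint-indindnumb}--\eqref{eq:mrs-constraint-box} and defers the ellipsoid-method details to Dughmi et al.~\cite{DughmiSTOC11}. You have merely spelled out the standard oracle-access, Lipschitz, and error-propagation bookkeeping that the paper leaves implicit, so there is nothing substantive to reconcile.
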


This yields the following result for $\delta$-estimates. Suppose we
are in the well-conditioned case, i.e., on any line in the feasible set
the second derivative of the objective function is at least $\lambda =
\frac{\sum_{v \in V} b_v([k])}{2^{\poly(n, k)}}$. Then a
$\delta$-estimate can be computed by computing an
$(1-\epsilon)$-approximate solution of the convex program with
$\epsilon=\frac{\delta^2}{2 \sum_{v \in V} b_v([k])}$. This solution
can be computed in time $\poly(n, k, \log(1/\delta))$.

\subsubsection{Guaranteeing Good Conditioning}
In general, the bound on the second derivative does not necessarily
have to hold. Therefore, the algorithm is modified as given in
Algorithm~\ref{alg:modifiedsimulation}.

\begin{algorithm}
Run Algorithm~\ref{alg:simulation}, let $S$ be the resulting allocation\;
Let $\beta$ be $\frac{1}{n k} \sum_{v \in V} \lvert S(v) \rvert$\;
Draw $q_1$ uniformly at random from $[0, 1]$\;
\If{$q_1 \leq \mu$}{
Set $S(v) := \emptyset$ for all $v \in V$\;
Draw $q_2$ uniformly at random from $[0, 1]$\;
\If{$q_2 \leq \beta$}{
Choose some user $v^\ast \in V$ uniformly at random\;
Set $S(v^\ast) = [k]$ and $S(v) = \emptyset$ for all $v \neq v^\ast$\;
}
}
\caption{Modified MIDR Algorithm.\label{alg:modifiedsimulation}}
\end{algorithm}

After having run Algorithm~\ref{alg:simulation}, the resulting
allocation is discarded with probability $\mu = 2^{-nk}$. Instead a
trivial allocation is returned, in which either only a single user
gets allocated all channels or even no channels are allocated at all,
as determined by another random experiment. However, since this action
is only taken with probability $1 - \mu = 1 - o(1)$, the approximation
factor is not affected.

On the contrary, we can show that the expected social welfare changes,
now having a curvature of at least $\lambda$. This is the missing
piece to build the $\delta$-estimates necessary to run the algorithm.

In order to determine the precise expected social welfare of the
modified algorithm, we have to first quantify the probability that the
initially computed solution is discarded. This is done with
probability $\beta$, which depends on the previous outcome. For the
expectation, we know
\[
\Ex{ \beta } = \Ex{ \frac{1}{nk} \sum_{v \in V} \lvert S(v) \rvert } =
\frac{1}{n k} \sum_{j = 1}^k \sum_{v \in V} \Pr{ j \in S(v) } =
\frac{1}{n k} \sum_{j = 1}^k \sum_{v \in V} \left( 1 - \e^{x_{v, j} /
    (2\alpha)} \right) \enspace.
\]
Therefore the expected social welfare is
\begin{eqnarray*}
&& (1 - \mu) \cdot \Ex{ b(S) } + \mu \cdot \Ex{ \beta } \frac{1}{n} \sum_{v \in V}
b_v([k])\\
& =& (1 - \mu) \cdot \Ex{ b(S) } + \frac{\mu}{n^2 k} \cdot \left( \sum_{j =
    1}^k \sum_{v \in V} \left( 1 - \e^{x_{v, j} / (2\alpha)} \right)
\right) \cdot \left( \sum_{v \in V} b_v([k]) \right) \enspace.
\end{eqnarray*}
Since both parts of the outer sum are non-negative, it suffices to
bound the curvature of the second one. The curvature of $\left(
  \sum_{j = 1}^k \sum_{v \in V} \left( 1 - \e^{x_{v, j} / (2\alpha)}
  \right) \right)$ is at least $(\e (2\alpha)^2)^{-1}$. Therefore, the
curvature of the second part is at least
\[
\frac{\mu}{n^2 k} \cdot \frac{1}{\e (2\alpha)^2} \cdot \left( \sum_{v \in V}
  b_v([k]) \right) = \lambda \enspace.
\]
As a consequence, the modified algorithm can be run with
$\delta$-approximates as described above with a resulting running time
that is $\poly(n,k)$ in expectation.


\section{Discussion and Open Problems}
\label{sec:discuss}

While the mechanisms presented in previous sections obtain
near-optimal guarantees on social welfare, they have some drawbacks
for application in practice. A serious problem are running times --
for MRS valuations our mechanism obtains polynomial running time only
in expectation. For symmetric valuations, we obtain polynomial
worst-case running times, but the convex optimization techniques
needed to apply randomized meta-rounding often have prohibitive
running times for large practical problem instances. Thus, let us
briefly discuss designing fast and simple mechanisms. How can we design a good and simple deterministic
mechanism to incentivize truth-telling among bidders?

To our knowledge, there are only two algorithmic approaches to the
channel assignment problem that yield approximation guarantees in the
order of $O(\rho)$. One approach is rounding of suitably relaxed
packing LPs, which turned out to be very successful in this and our
previous work~\cite{Hoefer11}. While pairwise independence can be used
to make these algorithms deterministic, they require randomization to
guarantee truthfulness and fail for deterministic truthfulness. The
other approach was proposed for the simplest case of a single channel
and unweighted conflict graphs, i.e., the maximum weighted independent
set problem. It is a simple greedy algorithm due to Akcoglu et
al~\cite{Akcoglu00} which first considers vertices one by one in
reverse of the ordering of $\pi$. If vertex $v$ is under
consideration, its current value is subtracted from the value of each
backward neighbor. If the value of a vertex drops to 0 or below before
it is under consideration in the ordering, this vertex is
removed. Finally, the algorithm makes a second pass over the surviving
vertices, this time in forward ordering of $\pi$, and greedily adds
each vertex to the independent set if possible. It can be shown using
a local ratio argument that it provides a
$\rho$-approximation~\cite{Ye09}.

It is tempting to believe that this algorithm is monotone and delivers
a deterministically truthful mechanism. Unfortunately, this is not the
case, see our example in Figure~\ref{fig:NotMonotone}. The problem is
that the algorithm makes a second pass over the vertices which
introduces non-trivial dependencies among bids and acceptance
decisions. Nevertheless, we show how to turn it into a monotone
algorithm by spending a $\log n$ factor in the approximation
guarantee. This is a promising first step towards designing simple
truthful deterministic mechanisms with non-trivial approximation
guarantees. In contrast to algorithms using the time-intensive
solution of convex optimization problems, such quick and simple greedy
rules are much more suitable for application in practice. Providing
good and simple mechanisms is a major open direction for future work.

\begin{figure}
  \begin{center}
\begin{tikzpicture}[scale=1.3]
  \tikzstyle{vertex}=
  [%
    draw=black,%
    minimum size=2mm,%
    circle,%
    thick%
  ]
\node at (0, -2) {(a)};
\node[vertex,label=above:$x$] (1) at (0,0) {$7$};
\node[vertex,label=above:$6$] (2) at (1,0) {$6$};
\node[vertex,label=above:$6$] (3) at (2,0) {$5$};
\node[vertex,label=right:$7$] (4) at (1,-1) {$4$};
\node[vertex,label=right:$7$] (5) at (2,-1) {$3$};
\node[vertex,label=right:$7$] (6) at (3,-1) {$2$};
\node[vertex,label=right:$11.5$] (7) at (1,-2) {$1$};
\path (1) edge (2);
\path (1) edge (7);
\path (2) edge (3);
\path (2) edge (4);
\path (3) edge (4);
\path (3) edge (5);
\path (3) edge (6);
\path (4) edge (7);
\path (5) edge (7);
\path (6) edge (7);
\end{tikzpicture}
\begin{tikzpicture}[scale=1.3]
  \tikzstyle{vertex}=
  [%
    draw=black,%
    minimum size=2mm,%
    circle,%
    thick%
  ]
  \tikzstyle{vertextaken}=
  [%
    draw=black,%
    fill=black!40,%
    minimum size=2mm,%
    circle,%
    thick%
  ]
\node at (0, -2) {(b)};
\node[vertextaken,label=above:$3$] (1) at (0,0) {$7$};
\node[vertex,label=above:$3$] (2) at (1,0) {$6$};
\node[vertex,label=above:$3$] (3) at (2,0) {$5$};
\node[vertextaken,label=right:$1$] (4) at (1,-1) {$4$};
\node[vertextaken,label=right:$4$] (5) at (2,-1) {$3$};
\node[vertextaken,label=right:$4$] (6) at (3,-1) {$2$};
\node[vertex,label=right:$-0.5$] (7) at (1,-2) {$1$};
\path (1) edge (2);
\path (1) edge (7);
\path (2) edge (3);
\path (2) edge (4);
\path (3) edge (4);
\path (3) edge (5);
\path (3) edge (6);
\path (4) edge (7);
\path (5) edge (7);
\path (6) edge (7);
\end{tikzpicture}
\begin{tikzpicture}[scale=1.3]
  \tikzstyle{vertex}=
  [%
    draw=black,%
    minimum size=2mm,%
    circle,%
    thick%
  ]
  \tikzstyle{vertextaken}=
  [%
    draw=black,%
    fill=black!40,%
    minimum size=2mm,%
    circle,%
    thick%
  ]
\node at (0, -2) {(c)};
\node[vertex,label=above:$4$] (1) at (0,0) {$7$};
\node[vertex,label=above:$2$] (2) at (1,0) {$6$};
\node[vertextaken,label=above:$4$] (3) at (2,0) {$5$};
\node[vertex,label=right:$1$] (4) at (1,-1) {$4$};
\node[vertex,label=right:$3$] (5) at (2,-1) {$3$};
\node[vertex,label=right:$3$] (6) at (3,-1) {$2$};
\node[vertextaken,label=right:$0.5$] (7) at (1,-2) {$1$};
\path (1) edge (2);
\path (1) edge (7);
\path (2) edge (3);
\path (2) edge (4);
\path (3) edge (4);
\path (3) edge (5);
\path (3) edge (6);
\path (4) edge (7);
\path (5) edge (7);
\path (6) edge (7);
\end{tikzpicture}
  \end{center}
  \caption{\label{fig:NotMonotone} Example for non-monotonicity of the
    greedy algorithm. In part (a), the number inside the circle denotes the vertex's index in the $\pi$-ordering, the one outside its reported valuation. If bidder$7$ reports $x=3$, he is included in the solution; if he bids up to $x = 4$, he is dropped by the
    algorithm. Parts (b) and (c) depict the resulting values and the
    independent sets at the end of the execution of the algorithm for
    each case, respectively.}
\end{figure}
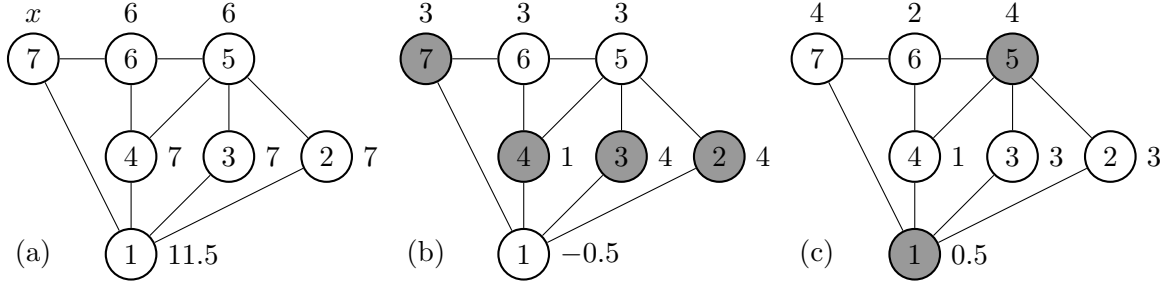

\begin{algorithm}
Sort the set of bids $B = \{ b_v \mid v \in V\}$ in decreasing order,
let $b_i$ be the $i$-th highest bid\; %
\For{$i=1$ to $n$}{ %
  Let $V_i = \{ v \in V \mid b_v \ge b_i \}$ and $S_i = \emptyset$\; %
  \For{$v \in V_i$ in increasing order of $\pi$ values}{ %
    If $N(v) \cap S_i = \emptyset$, add $v$ to $S_i$ } }%
Output $S = S_{i^*}$ with $i^* = \arg\max_{i} \sum_{v \in S_i}
b_v$\;
\caption{Monotone $O(\rho \cdot \log n)$-algorithm for Maximum
  Weighted Independent Set.\label{algo:greedy}}
\end{algorithm}

\begin{theorem}
  Algorithm~\ref{algo:greedy} is deterministic and monotone. The
  computed solution is a $O(\rho \cdot \log n)$-approximation for the
  maximum weight independent set problem.
\end{theorem}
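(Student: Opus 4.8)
The plan is to check the three assertions in turn. Determinism is immediate: after sorting the bids the procedure only runs a fixed greedy rule for each of the $n$ thresholds and takes an $\arg\max$ (with a fixed tie-breaking rule, say by vertex index). It is convenient to index the candidate solutions by threshold \emph{value}: for a value $\tau$ let $C(\tau)$ be the greedy independent set built on $\{u\in V: b_u\ge\tau\}$ in the order $\pi$; the algorithm returns the $C(\tau)$ of maximum total weight as $\tau$ ranges over the reported bids, and we write $\omega$ for that weight. Since the level $i=1$ runs the greedy on the set of top bidders and always selects at least one of them, $\mathrm{ALG}=\omega\ge\max_v b_v=:W$.

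For monotonicity, fix a bidder $v$ lying in the output and let $v$ raise its bid from $b_v$ to $b_v'>b_v$. I use two elementary facts about the greedy: (a) $C(\tau)$ depends only on the \emph{set} $\{u:b_u\ge\tau\}$ and on $\pi$, not on the numerical bids; and (b) inserting an extra vertex into the greedy's input leaves the output independent set unchanged whenever that vertex is \emph{not} selected, since a rejected vertex blocks nobody processed after it. Raising $b_v$ leaves every input set $\{u:b_u\ge\tau\}$ with $\tau\le b_v$ or $\tau>b_v'$ untouched and merely inserts $v$ into those with $\tau\in(b_v,b_v']$. Hence, by (a) and (b), every candidate of the perturbed run that does not contain $v$ equals, as a set, a candidate of the original run, and (as $v$ is absent) its weight is the same under both bid vectors and therefore at most $\omega$. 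On the other hand, if $\tau^\ast$ is the winning threshold of the original run, then $b_v\ge\tau^\ast$ (because $v\in C(\tau^\ast)$), so in the perturbed run the candidate at threshold $\tau^\ast$ is again $C(\tau^\ast)\ni v$, now of weight $\omega+(b_v'-b_v)>\omega$. Consequently the perturbed run's maximum-weight candidate has weight strictly above $\omega$, so it must be one containing $v$, irrespective of tie-breaking; thus $v$ is still in the output. (If $\tau^\ast$ equals $b_v$ and is no longer a reported value one uses the smallest reported bid in $[b_v,b_v']$, whose level has the same input set.)

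For the approximation ratio, let $M^\ast$ be an optimal independent set with $b^\ast=\sum_{v\in M^\ast}b_v$. The vertices of $M^\ast$ with $b_v<W/n$ contribute in total less than $n\cdot(W/n)=W\le\mathrm{ALG}$. Partition the remaining vertices of $M^\ast$ into the $\lfloor\log_2 n\rfloor+1$ buckets $B_t=\{v\in M^\ast:W/2^{t+1}<b_v\le W/2^t\}$. For a nonempty $B_t$, let $i$ be the level whose threshold is the smallest reported bid exceeding $W/2^{t+1}$; then $B_t\subseteq V_i$, and $B_t$ is an independent set. The combinatorial heart of the argument is that the greedy on $V_i$ — swept through $\pi$ in the direction for which the inductive-independence inequality caps at $\rho$ the number of $B_t$-vertices that any single chosen vertex blocks — returns $S_i$ with $|B_t|\le|B_t\cap S_i|+\rho|S_i|\le(\rho+1)|S_i|$ (every unchosen $w\in B_t$ is charged to a chosen neighbour, and each chosen vertex receives at most $\rho$ charges). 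As every vertex of $V_i$ has bid above $W/2^{t+1}$ while every vertex of $B_t$ has bid at most $W/2^t$, we get $\mathrm{ALG}\ge\sum_{u\in S_i}b_u>|S_i|\cdot\frac{W}{2^{t+1}}\ge\frac{1}{2(\rho+1)}\sum_{v\in B_t}b_v$. Summing the light contribution with the bounds for all $O(\log n)$ buckets yields $b^\ast\le\bigl(1+2(\rho+1)(\lfloor\log_2 n\rfloor+1)\bigr)\mathrm{ALG}=O(\rho\log n)\cdot\mathrm{ALG}$.

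The main obstacle I anticipate is the monotonicity step — in particular fact (b) and the bookkeeping it forces, namely that every $v$-free candidate of the perturbed run is genuinely a candidate of the original run with unchanged weight, so that raising $b_v$ can only help the (unique-up-to-ties) winning level containing $v$. A secondary delicate point is pinning down the orientation of the $\pi$-sweep so that the ``one chosen vertex blocks at most $\rho$ independent-set vertices'' charging is exactly the inductive-independence inequality. The remaining estimates — the threshold comparisons inside the buckets, the light-vertex bound, and the bucket count — are routine.
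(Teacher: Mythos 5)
Your proof is correct in substance and follows essentially the same route as the paper: monotonicity by observing that all candidate sets not containing $v$ are unaffected by a change of $v$'s bid, and the approximation bound by splitting the optimum into $O(\log n)$ geometric value classes plus a negligible tail and charging each class against the greedy run at the matching bid threshold. Your monotonicity argument is phrased for a winner raising his bid rather than (as in the paper) a loser lowering it; these are equivalent for a single-parameter allocation rule, and your bookkeeping (facts (a) and (b) plus the observation that the original winning level reappears, with strictly larger weight, as a level of the perturbed run) is actually tighter than the paper's index-shifting argument. The only real difference is that you inline the charging argument showing the per-level greedy is a $(\rho+1)$-approximation for the unweighted subproblem, where the paper simply invokes the known greedy $\rho$-approximation of Akcoglu et al./Ye--Borodin.

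The orientation question you flag as unresolved does need to be settled, and you should commit to the \emph{reverse} sweep. Under the paper's definition of $\rho$ (the bound is on $\sum_{u \in M_v}\bar{w}(u,v)$ over independent vertices $u$ with $\pi(u)<\pi(v)$, i.e.\ \emph{earlier} vertices), your charging works precisely when the greedy processes vertices in \emph{decreasing} order of $\pi$: an unchosen $w\in B_t$ is then blocked by a chosen $s$ with $\pi(s)>\pi(w)$, and applying the definition with $v=s$ and $M=B_t$ caps the charges on $s$ by $\rho$. With the increasing-order sweep that Algorithm~\ref{algo:greedy} literally prescribes, the bound fails: take a star whose center precedes all leaves in $\pi$ (so $\rho=1$), give the center bid $1$ and each of the $m$ leaves bid $1-\epsilon$; every level's greedy picks only the center, so the output has value $1$ while the optimum is $m(1-\epsilon)$, far worse than $O(\rho\log n)$. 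The paper's own proof implicitly assumes the reverse sweep as well, since for uniform values the Akcoglu local-ratio algorithm it appeals to \emph{is} the decreasing-$\pi$ greedy; the ``increasing order'' in the algorithm's pseudocode is thus an inconsistency of the paper, not a flaw in your argument. With that reading fixed, your proof is complete.
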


\begin{proof}
  We first prove that the algorithm is monotone. We show that if $v
  \not\in S$ and lies a value $b'_v < b_v$, then $v$ will never be
  able to become part of $S$. Suppose $v$ is currently first
  considered in iteration $i$. Submitting a smaller bid causes $v$ to
  be considered at a later point $j > i$. In the sets
  $V_i,\ldots,V_{j-1}$ player $v$ is replaced by a different player,
  sets $V_1,\ldots,V_{i-1}$ and $V_j,\ldots,V_n$ remain as before, and
  so do $S_1,\ldots,S_{i-1}$ and $S_j,\ldots,S_n$. If one of these
  sets was chosen as the best set before, then $v$ will again not be
  part of $S$ if he lies. The only sets that can be different now are
  $S_i,\ldots,S_{j-1}$, in which $v$ cannot be present. If previously
  set $S_k$ with $k \in \{i,\ldots,j-1\}$ was chosen as the best set,
  it did not include $v$. Thus, in the run $v$ was blocked by some
  other vertex. Removing $v$ does not change the execution of the
  algorithm, thus the same set will be computed again -- however, due
  to the change in the ordering it will now appear as $S_{k-1}$. The
  only sets $S_k$ that can change are the ones with $k \in
  \{i,\ldots,j-1\}$ where $v$ was included before. However, if a new
  optimal set appears here, it does not include $v$ as well. In
  conclusion, if $v \not\in S$, he cannot become included into $S$ by
  reducing his bid.

  To bound the approximation factor, we use an argument similar
  to~\cite{Halldorsson00}. Let us consider the problem on the subset
  $V_i$ and assume all vertices have value $b_i$. For this problem,
  our algorithm is equivalent to the greedy $\rho$-approximation
  algorithm for unweighted vertices. Hence, for this subproblem we
  obtain a $\rho$-approximation. With $S_i'$ being the optimum for
  this subproblem, then we have
  \[\sum_{v\in S} b_v \quad = \quad \max_{i=1}^n \sum_{v \in S_i} b_v
  \quad \ge \quad \max_{i=1}^n
  \{|S_i|\cdot b_i\} \quad \ge \quad \frac{1}{\rho} \cdot \max_{i=1}^n
  \{|S_i'|\cdot b_i\}\enspace.\]
  Now consider intervals $I_j = (b_1/2^j,b_1/2^{j-1}]$, for
  $j=1,\ldots,\log n$. The last interval we set $I_{(\log n)+1} =
  [0,b_1/n]$. For each such interval we consider the subgraph of
  vertices $v$ with value $b_v \in I_j$ and the optimum solution $S^j$
  w.r.t. to this subinstance. Consider all $i$ such that $b_i \in
  I_j$. It is easy to see that for all $j=1,\ldots,\log n$
  \[ \frac{1}{\rho} \cdot \max_{i : b_i \in I_j} \{|S_i'|\cdot b_i\} \ge
  \frac{1}{2 \cdot \rho} \sum_{v \in S^j} b_v\enspace.\]
  For $j = (\log n)+1$ we obviously have $|S_1'|\cdot b_1 \ge \sum_{v
    \in S^j} b_v$. Thus, in total we have
  \[
  \frac{1}{\rho} \cdot \max_{i=1}^n \{|S_i'|\cdot b_i\} \quad \ge \quad
  \frac{1}{2\cdot\rho\cdot \log n + \rho} \cdot \sum_{j=1}^{\log n} \sum_{v
    \in S^j} b_v \quad \ge \quad \frac{1}{2\cdot\rho\cdot \log n + \rho} \cdot \sum_{v
    \in S^*} b_v\enspace,
  \]
  since the sum of values for the optimal solutions in the intervals
  is bigger than the global optimum $S^*$. This proves the
  approximation factor.
\end{proof}

This represents a promising first step towards designing simple
truthful deterministic mechanisms with non-trivial approximation
guarantees. In contrast to algorithms using the time-intensive
solution of convex optimization problems, such quick and simple greedy
rules are much more suitable for application in practice. In addition,
the concept of truthfulness in expectation used in the previous
sections has drawbacks, e.g., it is not enough to motivate
\emph{risk-aware} bidders to reveal their valuations truthfully. While
there are many open problems stemming from our work (e.g., improving
the approximation bounds for specific interference models), providing
good and simple mechanisms for stronger notions of truthfulness is a
challenging and arguably the most interesting avenue for future work.


\bibliographystyle{plain}
\bibliography{../../../Bibfiles/literature,../../../Bibfiles/martin}

\end{document}